\numberwithin{equation}{section}
\theoremstyle{plain}
\newtheorem{thm}{Theorem}[section]
\providecommand{\tabularnewline}{\\}
\providecommand{\algorithmname}{Algorithm}
\newtheorem{prop}[thm]{\protect\propositionname}
\theoremstyle{plain}
\numberwithin{equation}{section}
\providecommand{\propositionname}{Proposition}
\begin{document}

\begin{frontmatter}
\title{Spatial 3D Mat\'{e}rn priors for fast whole-brain fMRI analysis}
\runtitle{Spatial 3D Mat\'{e}rn priors for fMRI}
\begin{aug}
\author{\fnms{Per} \snm{Sid\'{e}n}\thanksref{addr1,t1}\ead[label=e1]{per.siden@liu.se}},
\author{\fnms{Finn} \snm{Lindgren}\thanksref{addr2}\ead[label=e2]{finn.lindgren@ed.ac.uk}},
\author{\fnms{David} \snm{Bolin}\thanksref{addr3}\ead[label=e3]{ david.bolin@kaust.edu.sa}},
\author{\fnms{Anders} \snm{Eklund}\thanksref{addr1,addr4}\ead[label=e4]{anders.eklund@liu.se}}
\and
\author{\fnms{Mattias} \snm{Villani}\thanksref{addr1,addr5}\ead[label=e5]{mattias.villani@gmail.com}}

\runauthor{P. Sid\'{e}n, F. Lindgren, D. Bolin, A. Eklund and M. Villani}

\address[addr1]{Division of Statistics and Machine Learning, Dept. of Computer and Information Science, Link\"{o}ping University, SE-581 83 Link\"{o}ping, Sweden.
    \printead{e1} 
    \printead{e4}
    \printead{e5}
}
\address[addr2]{School of Mathematics, The University of Edinburgh, James Clerk Maxwell Building, The King's Building, Peter Guthrie Tait Road, Edinburgh, EH9 3FD, United Kingdom.
    \printead{e2}
}
\address[addr3]{CEMSE Division, King Abdullah University of Science and Technology, Saudi Arabia.
    \printead{e3}
}
\address[addr4]{Division of Medical Informatics, Dept. of Biomedical Engineering and Center for Medical Image Science and Visualization (CMIV), Link\"{o}ping University, SE-581 83 Link\"{o}ping, Sweden.
}
\address[addr5]{Department of Statistics, Stockholm University, SE-106 91 Stockholm, Sweden.
}
\thankstext{t1}{Corresponding author.}
\end{aug}

\begin{abstract}
Bayesian whole-brain functional magnetic resonance imaging (fMRI) analysis with three-dimensional spatial smoothing priors has been shown to produce state-of-the-art activity maps without pre-smoothing the data. The proposed inference algorithms are computationally demanding however, and the proposed spatial priors have several less appealing properties, such as being improper and having infinite spatial range. We propose a statistical inference framework for whole-brain fMRI analysis based on the class of Mat\'{e}rn covariance functions. The framework uses the Gaussian Markov random field (GMRF) representation of possibly anisotropic spatial Mat\'{e}rn fields via the stochastic partial differential equation (SPDE) approach of \citet{Lindgren2011}. This allows for more flexible and interpretable spatial priors, while maintaining the sparsity required for fast inference in the high-dimensional whole-brain setting. We develop an accelerated stochastic gradient descent (SGD) optimization algorithm for empirical Bayes (EB) inference of the spatial hyperparameters. Conditionally on the inferred hyperparameters, we make a fully Bayesian treatment of the brain activity. The Mat\'{e}rn prior is applied to both simulated and experimental task-fMRI data and clearly demonstrates that it is a more reasonable choice than the previously used priors, using comparisons of activity maps, prior simulation and cross-validation.
\end{abstract}

\begin{keyword}
\kwd{spatial priors}
\kwd{Gaussian Markov random fields}
\kwd{fMRI}
\kwd{spatiotemporal modeling}
\kwd{efficient computation}
\end{keyword}

\end{frontmatter}

\section{Introduction}\label{sec:intro}

Functional magnetic resonance imaging (fMRI) is a noninvasive technique
for making inferences about the location and magnitude of neuronal
activity in the living human brain. fMRI has provided neuroscientists
with countless new insights on how the brain operates \citep{Lindquist2008}.
By observing changes in blood oxygenation in a subject during an experiment,
a researcher can apply statistical methods such as the general linear
model (GLM) \citep{Friston1995a} to draw conclusions regarding task-related
brain activations.

fMRI data can be seen as a sequence of three-dimensional images collected
over time, where each image can be divided into a large number of
voxels. A problem with the GLM approach and many of its successors
is that the model is mass-univariate, that is, it analyses each voxel
separately and ignores the inherent spatial dependencies between neighboring
brain regions. Normally, this is accounted for by pre-smoothing data
and using post-correction of multiple hypothesis testing, but this
strategy is unsatisfactory from a modeling perspective and has been
shown to lead to spurious results in many cases \citep{Eklund2016}.

One of the earliest Bayesian spatial smoothing priors for neuroimaging
is the two-dimensional prior in slice-wise analysis proposed by \citet{pennyEtAlSpatialPrior2005}.
The spatial prior on the activity coefficients reflect the prior knowledge
that activated regions are spatially contiguous and locally homogeneous.
\citet{pennyEtAlSpatialPrior2005} use the variational Bayes (VB)
approach to approximate the posterior distribution of the activations.
\citet{Sid??n2017} extend that prior to the 3D case and propose a
fast Markov Chain Monte Carlo (MCMC) method and an improved VB approach, that is empirically shown to give negligible error compared to MCMC.

In this paper, we show how the spatial priors used in the previous
articles can be seen as special cases of the Gaussian Markov random
field (GMRF) representation of Gaussian fields of the Mat\'{e}rn class,
using the stochastic partial differential equation (SPDE) approach
presented in \citet{Lindgren2011}. The Mat\'{e}rn family of covariance
functions, attributed to \citet{Matern1960} and popularized by \citet{Handcock1993},
is seeing increasing use in spatial statistical modeling. It is also
a standard choice for Gaussian process (GP) priors in machine learning
\citep{Seeger2004}. In his practical suggestions for prediction of
spatial data, \citet{Stein1999} notes that the properties of a spatial
field depends strongly on the local behavior of the field and that
this behavior is unknown in practice and must be estimated from the
data. Moreover, some commonly used covariance functions, for example
the Gaussian (also known as the squared exponential), do not provide
enough flexibility with regard to this local behavior and Stein summarizes
his suggestions with ``\textit{Use the Mat\'{e}rn model}''. Using the
Matern prior on large-scale 3D data such as fMRI data is computationally
challenging, however, in particular with MCMC. We present a fast Bayesian
inference framework to make Stein's appeal feasible in practical work.

Even though the empirical spatial auto-correlation functions of raw
fMRI data seem more fat-tailed than a Gaussian \citep{Eklund2016,Cox2017},
standard practice has traditionally been to pre-smooth data using a
Gaussian kernel with reference to the matched filter theorem. The
Gaussian covariance function has also been used directly in the model
as a spatial GP prior \citep{Groves2009}, but using the standard
GP formulation results in a dense covariance matrix which becomes
too computationally expensive to invert even with only a few thousand
voxels. For this reason, much work on spatial modeling of fMRI data
has been using GMRFs instead, see for example \citet{Gossl2001,Woolrich2004c,pennyEtAlSpatialPrior2005,Harrison2010,Sid??n2017}.
GMRFs have the property of having sparse precision matrices, which
make them computationally very fast to use, but do not always correspond
to simple covariance functions, especially the intrinsic GMRFs often
used as priors, whose precision matrices are not invertible \citep{Isham2004}.
A different branch of Bayesian spatial models for fMRI has considered
selecting active voxels as a variable selection problem, modeling
the spatial dependence between the activity indicators rather than
between the activity coefficients \citep[see, among others][]{Smith2007,Vincent2010a,Lee2014,Zhang2014,Bezener2018}.
These articles mostly use Ising priors or GMRF priors squashed through
a cumulative distribution function (CDF) for the indicator dependence,
which also gives sparsity. However, these priors are rarely defined
over the whole brain, but are applied independently to parcels or
slices, probably due to computational costs. The SPDE approach of
\citet{Lindgren2011} has been applied to fMRI data before, slice-wise
by \citet{Yue2014a}, and on the sphere by \citet{Mejia2019} after
transforming the volumetric data to the cortical surface. In both
cases integrated nested Laplace approximations (INLA) \citep{Solis-Trapala2009}
were used for approximating the posterior, which is efficient but
presently cumbersome to apply directly to volumetric fMRI data, as
the \texttt{R-INLA} R-package currently lacks support for three-dimensional
data.

Our paper makes a number of contributions. First, we develop a fast
Bayesian inference algorithm that allows us to use spatial three-dimensional
whole-brain priors of the Mat\'{e}rn class on the activity coefficients, for which previous MCMC and VB approaches are not computationally feasible. The algorithm applies empirical Bayes (EB)
to optimize the hyperparameters of the spatial prior and the parameters
of the autoregressive noise model, using an accelerated version of
stochastic gradient descent (SGD). The link to the Mat\'{e}rn covariance
function gives the spatial hyperparameters nice interpretations, in
terms of range and marginal variance of the corresponding Gaussian
field. Given the maximum a posteriori (MAP) values of the optimized
parameters, we make a fully Bayesian treatment of the main parameters
of interest, that is, the activity coefficients, and compute brain
activity posterior probability maps (PPMs). The convergence of the
optimization algorithm is established and the resulting EB posterior
is compared to the exact MCMC posterior for the prior used in \citet{Sid??n2017},
showing the results to be extremely similar. Second, we develop an anisotropic version of the Matérn 3D prior. The anisotropic prior allows the spatial dependence to vary in the
$x$-, $y$- and $z$-direction, and we choose a parameterization
such that the new parameters do not the affect the marginal
variance of the field. Third, we apply the proposed Mat\'{e}rn priors to both simulated and real fMRI
datasets, and compare with the prior used in \citet{Sid??n2017} by
observing differences in the PPMs, by examining the plausibility
of new random samples of the different spatial priors, and using cross-validation (CV) on left-out voxels to assess the predictive performance, both in terms of point predictions and predictive uncertainty. Collectively, our demonstration
strongly suggests that the higher level of smoothness is more reasonable
for fMRI data, and also indicates that the second order Matérn prior (see the definition in Section~\ref{subsec:Spatial-prior-on-activations}) is more sensible
than its intrinsic counterpart.

We begin by reviewing the model of \citet{pennyEtAlSpatialPrior2005}, also examined in \citet{Sid??n2017},
and introducing the different spatial priors and associated hyperpriors
in Section~\ref{sec:Model}. In Section~\ref{sec:Methods}, we derive
the optimization algorithm for the EB method, and describe the
PPM computation. Experimental and simulation results are shown in
Section~\ref{sec:Results}. Section~\ref{sec:Discussion-and-future}
contains conclusions and recommendations for future work. The more
mathematical details of the model and priors, the derivation
of the gradient and approximate Hessian used in the SGD optimization
algorithm, and the CV framework are given in the supplementary material. 

The new methods in this article have been implemented and added to
the BFAST3D extension to the SPM software, available at \url{http://www.fil.ion.ucl.ac.uk/spm/ext/\#BFAST3D}.

\section{Model and priors\label{sec:Model}}

The model can be divided into three parts: (i) the measurement model,
which consists of a regression model that relates the observed blood
oxygen level dependent (BOLD) signal in each voxel to the experimental
paradigm and nuisance regressors, and a temporal noise model (Section~\ref{subsec:Measurement-model}),
(ii) the spatial prior that models the dependence of the regression
parameters between voxels (Sections~\ref{subsec:Spatial-prior-on-activations}
and \ref{subsec:Anisotropic-spatial-prior}), and (iii) the priors
on the spatial hyperparameters and noise model parameters (Sections~\ref{subsec:Hyperparameter-priors}
and \ref{subsec:Noise-model-priors}).

\subsection{Measurement model\label{subsec:Measurement-model}}

The single-subject fMRI-data is collected in a $T\times N$ matrix
$\mathbf{Y}$, with $T$ denoting the number of volumes collected
over time and $N$ the number of voxels. The experimental paradigm
is represented by the $T\times K$ design matrix $\mathbf{X}$, with
$K$ regressors representing for example the hemodynamic response
function (HRF) convolved with the binary time series of task events.
The model can be written as $\mathbf{Y}=\mathbf{X}\mathbf{W}+\mathbf{E}$,
where $\mathbf{W}$ is a $K\times N$ matrix of regression coefficients
and \textbf{$\mathbf{E}$} is a $T\times N$ matrix of error terms.
We will also work with the equivalent vectorized formulation $\mathbf{y}=\mathbf{\bar{\mathbf{X}}\boldsymbol{\beta}}+\mathbf{e}$,
where $\mathbf{y}=\text{vec}\left(\mathbf{Y}^{T}\right)$, $\bar{\mathbf{X}}=\mathbf{X}\otimes\mathbf{I}_{N}$,
$\boldsymbol{\beta}=\text{vec}\left(\mathbf{W}^{T}\right)$ and $\mathbf{e}=\text{vec}\left(\mathbf{E}^{T}\right)$.
The error terms are modeled as Gaussian and independent across voxels,
possibly following voxel-specific $P$th order AR models, described
by the $N\times1$ vector $\boldsymbol{\lambda}$ of noise precisions
and the $P\times N$ matrix $\mathbf{A}$ of AR parameters. For the
ease of presentation we will in what follows only consider the special
case $P=0$, that is, error terms that are independent across both
time and voxels, and treat the more general case in the supplementary material.

We can divide our parameters into three groups: $\boldsymbol{\beta}$,
$\boldsymbol{\theta}_{n}$ and $\boldsymbol{\theta}_{s}$. Here,  $\boldsymbol{\beta}$
describes the brain activity coefficients which we are mainly interested
in, $\boldsymbol{\theta}_{n}=\left\{ \boldsymbol{\lambda},\mathbf{A}\right\} $
are parameters of the noise model, and $\boldsymbol{\theta}_{s}$
are spatial hyperparameters that will be introduced in the next subsection.

\subsection{Spatial prior on activations\label{subsec:Spatial-prior-on-activations}}

We assume spatial, three-dimensional GMRF priors \citep{Isham2004,Sid??n2017}
for the regression coefficients, which are independent across regressors,
that is, we assume $\boldsymbol{\beta}|\boldsymbol{\theta}_{s}\sim\mathcal{N}\left(\mathbf{0},\mathbf{Q}^{-1}\right)$.
Here $\mathbf{Q}=\underset{k\in\left\{ 1,\ldots,K\right\} }{\text{blkdiag}}\left[\mathbf{Q}_{k}\right]$ is a $KN\times KN$ block diagonal matrix with the $N\times N$
matrix $\mathbf{Q}_{k}$ as the $k$th block. The vector $\boldsymbol{\theta}_{s}=\left\{ \boldsymbol{\theta}_{s,1},\ldots,\boldsymbol{\theta}_{s,K}\right\} $
contains the spatial hyperparameters that the different $\mathbf{Q}_{k}$ depend on. The precision matrices $\mathbf{Q}_{k}$ may be chosen
differently for different $k$. In this paper, we construct the different $\mathbf{Q}_{k}$ using the SPDE approach \citep{Lindgren2011}, which allows for sparse GMRF representations of Mat\'{e}rn fields. An overview of the different priors can be seen in Table~\ref{tab:Precision-matrices}, and are described in more detail below.

\begin{table}
\caption{Summary of the spatial priors used and their precision matrices. The global shrinkage (GS) prior is spatially independent, while the intrinsic conditional autoregression (ICAR), Matérn (M) and anisotropic Matérn (A-M) can be seen as GMRF representations of generalized Mat\'{e}rn fields.\label{tab:Precision-matrices}}
\begin{tabular}{llll}
\toprule
Spatial prior & $\alpha$ & $\kappa$ & Precision matrix\tabularnewline
\midrule
GS & - & - & $\tau^{2}\mathbf{I}$\tabularnewline[\doublerulesep]
ICAR$\left(1\right)$ & $1$ & $=0$ & $\tau^{2}\mathbf{G}$\tabularnewline[\doublerulesep]
M$\left(1\right)$ & $1$ & $>0$ & $\tau^{2}\mathbf{K},\,\,\,\mathbf{K}=\kappa^{2}\mathbf{I}+\mathbf{G}$\tabularnewline[\doublerulesep]
ICAR$\left(2\right)$ & $2$ & $=0$ & $\tau^{2}\mathbf{G}^{T}\mathbf{G}$\tabularnewline[\doublerulesep]
M$\left(2\right)$ & $2$ & $>0$ & $\tau^{2}\mathbf{K}^{T}\mathbf{K},\,\,\,\mathbf{K}=\kappa^{2}\mathbf{I}+\mathbf{G}$\tabularnewline[\doublerulesep]
\noalign{\vskip\doublerulesep}
A-M$(2)$ & $2$ & $>0$ & $\tau^{2}\mathbf{K}^{T}\mathbf{K},\,\,\,\mathbf{K}=\kappa^{2}\mathbf{I}+h_{x}\mathbf{G}_{x}+h_{y}\mathbf{G}_{y}+h_{z}\mathbf{G}_{z}$\tabularnewline[\doublerulesep]
\bottomrule 
\end{tabular}%
\end{table}

\citet{Sid??n2017} focus on the unweighted graph Laplacian prior
$\mathbf{Q}_{k}=\tau^{2}\mathbf{G}$ which we refer to here as the
ICAR$\left(1\right)$ (first-order intrinsic conditional autoregression) prior.
The matrix $\mathbf{G}$ is defined by
\begin{equation}
\ G_{i,j}=\begin{cases}
n_{i} & ,\,\text{for\,\,}i=j\\
-1 & ,\,\text{for\,\,}i\sim j\\
0 & ,\,\text{otherwise,}
\end{cases}\label{eq:DiscreteLaplaceG}
\end{equation}
where $i\sim j$ means that $i$ and $j$ are adjacent voxels and
$n_{i}$ is the number of voxels adjacent to voxel $i$. The ICAR$\left(1\right)$
prior can be derived from the local assumption that $x_{i}-x_{j}\sim\mathcal{N}(0,\tau^{-2})$,
for all unordered pairs of adjacent voxels $(i,j)$, where $\mathbf{x}$
denotes the GMRF \citep{Isham2004}. Thus, one can see that $\tau^{2}$
controls how much the field can vary between neighboring voxels, where
large values of $\tau^{2}$ enforces a field that is spatially smooth.
The ICAR$(1)$ prior is default in the SPM software for Bayesian fMRI analysis.
The second-order ICAR$\left(2\right)$ prior is a more smooth alternative, corresponding
to a similar local assumption for the second-order differences, and
has been used earlier for fMRI analysis in 2D \citep{pennyEtAlSpatialPrior2005}.
The ICAR priors can be extended by adding $\kappa^{2}$ to the
diagonal of $\mathbf{G}$ as in the right hand column of Table~\ref{tab:Precision-matrices},
and when $\kappa>0$ we refer to these as M$\left(\alpha\right)$
($\alpha$-order Mat\'{e}rn) priors. The reason for this is the SPDE link established
by \citet{Lindgren2011}. For example, the M$\left(2\right)$ prior
can be seen as the solution $\mathbf{u}$ to
\begin{equation}
\tau\left(\kappa^{2}\mathbf{I}+\mathbf{G}\right)\mathbf{u}\sim\mathcal{N}\left(\mathbf{0},\mathbf{I}\right),\label{eq:SPDEMatrixEq}
\end{equation}
which can in turn be seen as a numerical finite difference approximation
to the SPDE
\begin{equation}
\left(\kappa^{2}-\Delta\right)^{\alpha/2}\tau u\left(\mathbf{s}\right)=\mathcal{W}\left(\mathbf{s}\right),\label{eq:SPDE}
\end{equation}
when $\alpha=2$. Here $\mathbf{s}$ denotes a point in space, $\alpha$
is a smoothness parameter, $\Delta$ is the Laplace operator, and
$\mathcal{W}\left(\mathbf{s}\right)$ is spatial white noise. Define also the smoothness parameter $\nu=\alpha-d/2$, where $d$ is the
dimension of the domain. For $\nu>0$ and $\kappa>0$, it can be shown
that a Gaussian field $u(\mathbf{s})$ is a solution to the SPDE in
Eq.~(\ref{eq:SPDE}), when it has the Mat\'{e}rn covariance function
\citep{whittle1954,whittle1963}
\begin{equation}
C(\delta)=\frac{\sigma^{2}}{2^{\nu-1}\Gamma(\nu)}\left(\kappa\delta\right)^{\nu}K_{\nu}\left(\kappa\delta\right),\label{eq:MaternCovFunc}
\end{equation}
where $\delta$ is the Euclidean distance between two points in $\mathbb{R}^{d}$,
$K_{\nu}$ is the modified Bessel function of the second kind and
\begin{equation}
\sigma^{2}=\frac{\Gamma\left(\nu\right)}{\Gamma\left(\nu+d/2\right)\left(4\pi\right)^{d/2}\tau^{2}\kappa^{2\nu}}\label{eq:MaternMarginalVariance}
\end{equation}
is the marginal variance of the field $u(\mathbf{s})$. As $d=3$
in our case, for $\alpha=2$ we have $\nu=1/2$ which is a special
case where the Mat\'{e}rn covariance function is the same as the exponential
covariance function. In this paper we also consider the SPDE when
$\kappa=0$ or $\nu=-1/2$, in which case the solutions no longer
have Mat\'{e}rn covariance, but are still well-defined random measures,
and we will refer to them as generalized Mat\'{e}rn fields.

We also define $\mathbf{K}=\kappa^{2}\mathbf{I}+\mathbf{G}$, in which
case the solution to Eq.~(\ref{eq:SPDEMatrixEq}) is $\mathbf{u}\sim\mathcal{N}\left(\mathbf{0},\left(\tau^{2}\mathbf{K}\mathbf{K}\right)^{-1}\right)$,
which is largely the same as the solution obtained in \citet{Lindgren2011}
using the finite-element method when the triangle basis points are
placed at the voxel locations, apart for some minor differences at
the boundary. We use the same definition as \citet{Lindgren2011}
for the range $\rho=\sqrt{8\nu}/\kappa$, for $\kappa>0$ and $\nu>0$,
which is a distance for which two points in the field have correlation
near to $0.13$. This reveals an important interpretation of the ICAR$(2)$ prior, since this can be seen as a special case of M$(2)$ with $\kappa=0\Leftrightarrow \rho=\infty$, that is, infinite range. For other values of $\alpha$, similar simple discrete solutions of the SPDE are also available. In particular, for $\alpha=1$
we have $\mathbf{u}\sim\mathcal{N}\left(\mathbf{0},\left(\tau^{2}\mathbf{K}\right)^{-1}\right)$.
Extensions to higher integer values of $\alpha$ such as $\alpha=3,\,4,\ldots$
are straightforward in theory \citep{Lindgren2011}, but will result
in less sparse precision matrices $\mathbf{Q}_{k}$ and thereby longer
computing times, and more involved gradient expressions for the parameter
optimization in Section~\ref{subsec:Parameter-optimization}.

For each choice of $\mathbf{Q}_{k}$, we have spatial hyperparameters
$\boldsymbol{\theta}_{s,k}=\left\{ \tau_{k}^{2},\kappa_{k}^{2}\right\} $,
which will normally be estimated from data. For regressors not related
to the brain activity, that is, head motion regressors and voxel intercepts,
we do not use a spatial prior, but instead a global shrinkage (GS)
prior with precision matrix $\mathbf{Q}_{k}=\tau_{k}^{2}\mathbf{I}$. We could here infer
$\tau_{k}^{2}$ from the data, but will normally fix it to some small
value, for example $\tau_{k}^{2}=10^{-12}$, which gives a non-informative
prior that provides some numerical stability.

\subsection{Anisotropic spatial prior\label{subsec:Anisotropic-spatial-prior}}

The SPDE approach makes it possible to fairly easily construct anisotropic
priors, for example using a SPDE of the form

\begin{equation}
\left(\kappa^{2}-h_{x}\frac{\partial^{2}}{\partial x^{2}}-h_{y}\frac{\partial^{2}}{\partial y^{2}}-h_{z}\frac{\partial^{2}}{\partial z^{2}}\right)^{\alpha/2}\tau u\left(\mathbf{s}\right)=\mathcal{W}\left(\mathbf{s}\right),\label{eq:AnisoSPDE}
\end{equation}
with $h_{z}$ defined as $h_{z}=\frac{1}{h_{x}h_{y}}$ for identifiability.
For $\alpha=2$, this SPDE has a finite-difference solution with precision
matrix $\tau^{2}\mathbf{K}\mathbf{K}$, with $\mathbf{K}$ now defined
as $\mathbf{K}=h_{x}\mathbf{G}_{x}+h_{y}\mathbf{G}_{y}+h_{z}\mathbf{G}_{z}+\kappa^{2}\mathbf{I}$.
Here, $\mathbf{G}_{x}$ is defined as in Eq.~(\ref{eq:DiscreteLaplaceG}),
after redefining the neighbors as being only the adjacent
voxels in the $x$ direction. $\mathbf{G}_{y}$ and $\mathbf{G}_{z}$
are defined correspondingly, so that $\mathbf{G}=\mathbf{G}_{x}+\mathbf{G}_{y}+\mathbf{G}_{z}$.
When using this prior for regressor $k$ we have four parameters,
$\boldsymbol{\theta}_{s,k}=\left\{ \tau_{k}^{2},\kappa_{k}^{2},h_{x,k},h_{y,k}\right\} $.
The new parameters $h_{x}$ and $h_{y}$ allows for different relative
length scales of the spatial dependence in the $x$-, $y$- and $z$-direction,
which is reasonable considering the data might not have voxels of
equal size in all dimensions and the data collection is normally not
symmetric with respect to the three axes. Conveniently, $h_{x}=h_{y}=1$
gives the standard isotropic Mat\'{e}rn field defined earlier.
\begin{prop}
For $\alpha>d/2$, the anisotropic field $u$ defined in Eq.~(\ref{eq:AnisoSPDE}) on $\mathbb{R}^d$
has the marginal variance defined in Eq.~(\ref{eq:MaternMarginalVariance}),
 and the variance thus does not depend on $h_{x}$
and $h_{y}$. Furthermore, $Cov\left(u(\mathbf{s}),u(\mathbf{t})\right)=C\left(\sqrt{\left(\mathbf{s}-\mathbf{t}\right)^{T}\mathbf{H}^{-1}\left(\mathbf{s}-\mathbf{t}\right)}\right)$,
where $\mathbf{H}$ is a diagonal matrix with diagonal $\left(h_{x},h_{y},1/\left(h_{x}h_{y}\right)\right)^{T}$,
and C$\left(\delta\right)$ is the isotropic Mat\'{e}rn covariance function
defined in Eq.~(\ref{eq:MaternCovFunc}) with $\nu=\alpha-d/2$.\label{prop:AnisotropicMargVariance}
\end{prop}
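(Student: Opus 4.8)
The plan is to work in the Fourier domain, where the pseudodifferential operator on the left of Eq.~(\ref{eq:AnisoSPDE}) becomes multiplication by a scalar symbol, and then to recognize the resulting spectral density as an affine rescaling of the isotropic Mat\'{e}rn spectral density. Write $L = \kappa^2 - h_x\partial_{xx} - h_y\partial_{yy} - h_z\partial_{zz} = \kappa^2 - \nabla\cdot(\mathbf{H}\nabla)$ with $\mathbf{H} = \mathrm{diag}(h_x,h_y,h_z)$ and $h_z = 1/(h_xh_y)$, so that $\det\mathbf{H}=1$. Under the Fourier transform $L$ acts as multiplication by $\kappa^2 + \boldsymbol{\omega}^T\mathbf{H}\boldsymbol{\omega}$, hence $L^{\alpha/2}$ as multiplication by $(\kappa^2 + \boldsymbol{\omega}^T\mathbf{H}\boldsymbol{\omega})^{\alpha/2}$, and the stationary solution of $\tau L^{\alpha/2}u = \mathcal{W}$ has spectral density
\[
S_u(\boldsymbol{\omega}) = \frac{1}{(2\pi)^d\,\tau^2\,(\kappa^2 + \boldsymbol{\omega}^T\mathbf{H}\boldsymbol{\omega})^{\alpha}},
\]
with the same normalization of $\tau$ as in Eq.~(\ref{eq:MaternMarginalVariance}). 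Since $(\kappa^2 + \boldsymbol{\omega}^T\mathbf{H}\boldsymbol{\omega})^{-\alpha}$ decays like $\|\boldsymbol{\omega}\|^{-2\alpha}$, this density is integrable precisely when $2\alpha > d$, i.e.\ $\alpha > d/2$ (with $\kappa>0$); this is where the hypothesis enters, guaranteeing that the covariance below is finite and $u$ is a genuine stationary field.

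Next I would compute the covariance by inverse Fourier transform,
\[
\mathrm{Cov}(u(\mathbf{s}),u(\mathbf{t})) = \int_{\mathbb{R}^d} S_u(\boldsymbol{\omega})\, e^{\,i\boldsymbol{\omega}^T(\mathbf{s}-\mathbf{t})}\, d\boldsymbol{\omega},
\]
and perform the linear change of variables $\boldsymbol{\omega} = \mathbf{H}^{-1/2}\boldsymbol{\xi}$. Then $\boldsymbol{\omega}^T\mathbf{H}\boldsymbol{\omega} = \|\boldsymbol{\xi}\|^2$, $\boldsymbol{\omega}^T(\mathbf{s}-\mathbf{t}) = \boldsymbol{\xi}^T\mathbf{H}^{-1/2}(\mathbf{s}-\mathbf{t})$, and $d\boldsymbol{\omega} = (\det\mathbf{H})^{-1/2}\, d\boldsymbol{\xi} = d\boldsymbol{\xi}$ because $\det\mathbf{H}=1$. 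The integral becomes
\[
\mathrm{Cov}(u(\mathbf{s}),u(\mathbf{t})) = \int_{\mathbb{R}^d} \frac{e^{\,i\boldsymbol{\xi}^T\mathbf{H}^{-1/2}(\mathbf{s}-\mathbf{t})}}{(2\pi)^d\,\tau^2\,(\kappa^2+\|\boldsymbol{\xi}\|^2)^{\alpha}}\, d\boldsymbol{\xi},
\]
which is exactly the isotropic Mat\'{e}rn covariance of Eqs.~(\ref{eq:MaternCovFunc})--(\ref{eq:MaternMarginalVariance}) with parameters $\tau$, $\kappa$, $\nu=\alpha-d/2$, evaluated at lag $\mathbf{H}^{-1/2}(\mathbf{s}-\mathbf{t})$. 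Hence $\mathrm{Cov}(u(\mathbf{s}),u(\mathbf{t})) = C\big(\|\mathbf{H}^{-1/2}(\mathbf{s}-\mathbf{t})\|\big) = C\big(\sqrt{(\mathbf{s}-\mathbf{t})^T\mathbf{H}^{-1}(\mathbf{s}-\mathbf{t})}\big)$, and setting $\mathbf{s}=\mathbf{t}$ gives the marginal variance $C(0) = \sigma^2$ of Eq.~(\ref{eq:MaternMarginalVariance}), which does not involve $h_x$ or $h_y$. This proves both assertions of the proposition.

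The only delicate point, and the thing worth spelling out carefully, is the rigorous meaning of the fractional operator $L^{\alpha/2}$ and the sense in which Eq.~(\ref{eq:AnisoSPDE}) possesses a stationary solution with the claimed spectral density. For the case actually used, integer $\alpha=2$, this is immediate: $L$ is a genuine second-order elliptic operator with constant coefficients, and $u$ is simply the stationary Gaussian field whose precision operator is $\tau^2 L^2$, for which the spectral density follows directly. For general $\alpha > d/2$ one invokes the standard Fourier/spectral construction of stationary SPDE solutions, exactly as in \citet{Lindgren2011} and \citet{whittle1963}. Everything else --- the change of variables (legitimate because $\mathbf{H}$ is positive definite), the identification with the Mat\'{e}rn spectral density, and the evaluation at zero lag --- is routine, and the normalization of $\tau$ is inherited unchanged from the isotropic case so that no spurious constant appears.
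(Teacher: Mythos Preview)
Your proof is correct and follows essentially the same route as the paper: both arguments write down the spectral density $S_u(\boldsymbol{\omega})\propto(\kappa^2+\boldsymbol{\omega}^T\mathbf{H}\boldsymbol{\omega})^{-\alpha}$, perform a linear change of Fourier variables that diagonalizes $\mathbf{H}$, and use $\det\mathbf{H}=1$ so that no Jacobian factor appears, thereby identifying the covariance with the isotropic Mat\'{e}rn $C$ evaluated at $\|\mathbf{H}^{-1/2}(\mathbf{s}-\mathbf{t})\|$. The only cosmetic difference is that the paper introduces an auxiliary isotropic field $v$ and compares $Cov(v(\mathbf{H}^{-1/2}\mathbf{s}),v(\mathbf{H}^{-1/2}\mathbf{t}))$ with $Cov(u(\mathbf{s}),u(\mathbf{t}))$ via the substitution $\boldsymbol{\omega}=\mathbf{H}^{1/2}\mathbf{z}$, whereas you substitute $\boldsymbol{\omega}=\mathbf{H}^{-1/2}\boldsymbol{\xi}$ directly in the integral for $u$; these are the same computation.
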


\begin{proof}
We show the covariance formula first, and then the statement about
the marginal variance follows as $Cov\left(u(\mathbf{s}),u(\mathbf{s})\right)=C\left(\sqrt{\mathbf{0}^{T}\mathbf{H}^{-1}\mathbf{0}}\right)=C\left(0\right)$.
By using a certain definition of the Fourier transform, the spectral
density of $u$ in the anisotropic SPDE in Eq.~(\ref{eq:AnisoSPDE})
is 
\begin{equation}
S\left(\boldsymbol{\omega}\right)=\frac{1}{\left(2\pi\right)^{d}}\frac{1}{\tau^{2}\left(\kappa^{2}+\boldsymbol{\omega}^{T}\mathbf{H}\boldsymbol{\omega}\right)^{\alpha}},\label{eq:spectralDensity}
\end{equation}
so the covariance function can be written as
\begin{equation}
Cov\left(u(\mathbf{s}),u(\mathbf{t})\right)=\int_{\mathbb{R}^{d}}\frac{1}{\left(2\pi\right)^{d}}\frac{1}{\tau^{2}\left(\kappa^{2}+\boldsymbol{\omega}^{T}\mathbf{H}\boldsymbol{\omega}\right)^{\alpha}}e^{-i\boldsymbol{\omega}^{T}\left(\mathbf{s}-\mathbf{t}\right)}d\boldsymbol{\omega}.\label{eq:covIntegral}
\end{equation}
An isotropic field $v$ can be written as an anisotropic field with
$\mathbf{H}=\mathbf{I}$, so its covariance function for $\delta=\left\Vert \mathbf{s}-\mathbf{t}\right\Vert _{2}$
is
\begin{equation}
Cov\left(v(\mathbf{s}),v(\mathbf{t})\right)=\int_{\mathbb{R}^{d}}\frac{1}{\left(2\pi\right)^{d}}\frac{1}{\tau^{2}\left(\kappa^{2}+\boldsymbol{\omega}^{T}\boldsymbol{\omega}\right)^{\alpha}}e^{-i\boldsymbol{\omega}^{T}\left(\mathbf{s}-\mathbf{t}\right)}d\boldsymbol{\omega}.\label{eq:covFuncCovIntegral}
\end{equation}
On the other hand,
\begin{equation}
Cov\left(v(\mathbf{H}^{-1/2}\mathbf{s}),v(\mathbf{H}^{-1/2}\mathbf{t})\right)=\int_{\mathbb{R}^{d}}\frac{1}{\left(2\pi\right)^{d}}\frac{1}{\tau^{2}\left(\kappa^{2}+\boldsymbol{\omega}^{T}\boldsymbol{\omega}\right)^{\alpha}}e^{-i\boldsymbol{\omega}^{T}\left(\mathbf{H}^{-1/2}\mathbf{s}-\mathbf{H}^{-1/2}\mathbf{t}\right)}d\boldsymbol{\omega}\label{eq:cov_s-t}
\end{equation}
\[
=\int_{\mathbb{R}^{d}}\frac{1}{\left(2\pi\right)^{d}}\frac{1}{\tau^{2}\left(\kappa^{2}+\mathbf{z}^{T}\mathbf{H}\mathbf{z}\right)^{\alpha}}e^{-i\mathbf{z}^{T}\left(\mathbf{s}-\mathbf{t}\right)}\det\left(\mathbf{H}^{1/2}\right)d\mathbf{z}
\]
where the last step used the variable substitution $\boldsymbol{\omega}=\mathbf{H}^{1/2}\mathbf{z}$.
Since $\det\left(\mathbf{H}^{1/2}\right)=\sqrt{h_{x}\cdot h_{y}\cdot1/\left(h_{x}h_{y}\right)}=1$,
the last expression equals that in Eq.~(\ref{eq:covIntegral}). So
\begin{equation}
Cov\left(u(\mathbf{s}),u(\mathbf{t})\right)=Cov\left(v(\mathbf{H}^{-1/2}\mathbf{s}),v(\mathbf{H}^{-1/2}\mathbf{t})\right)=C\left(\sqrt{\left(\mathbf{s}-\mathbf{t}\right)^{T}\mathbf{H}^{-1}\left(\mathbf{s}-\mathbf{t}\right)}\right),\label{eq:endofProof}
\end{equation}
using that $\sqrt{\left(\mathbf{s}-\mathbf{t}\right)^{T}\mathbf{H}^{-1}\left(\mathbf{s}-\mathbf{t}\right)}=\left\Vert \mathbf{H}^{-1/2}\mathbf{s}-\mathbf{H}^{-1/2}\mathbf{t}\right\Vert _{2}$.
\end{proof}
Proposition~\ref{prop:AnisotropicMargVariance} implies that changing
$h_{x}$ or $h_{y}$ does not affect the marginal variance of the
field. This is convenient because it means that the anisotropic parameterization
does not change the interpretation of $\tau^{2}$ and $\kappa^{2}$,
apart from that $\rho=\sqrt{8\nu}/\kappa$ will now be the (in some sense) average range in the $x$-, $y$- and $z$-direction. Thus, we can use the
same priors for $\tau^{2}$ and $\kappa^{2}$ as in the isotropic case. By putting log-normal
priors on $h_{x}$ and $h_{y}$, as explained in the next subsection,
we get priors that are symmetric with respect to the $x$-, $y$-
and $z$-direction.

\subsection{Hyperparameter priors\label{subsec:Hyperparameter-priors}}

We will now specify priors for the spatial hyperparameters $\boldsymbol{\theta}_{s}=\left\{ \boldsymbol{\theta}_{s,1},\ldots,\boldsymbol{\theta}_{s,K}\right\} $,
which we let be independent across the different regressors $k$. For brevity,
we drop subindexing with respect to $k$ in what follows.

Penalised complexity (PC) priors \citep{Simpson2017} provide a framework
for specifying weakly informative priors that penalize deviation from
a simpler base model. \citet{Fuglstad2018} showed the usefulness
of PC priors for the hyperparameters of Mat\'{e}rn Gaussian random fields,
where the base model is chosen for $\kappa^{2}$ as the intrinsic
field $\kappa^{2}=0$ and the base model for $\tau^{2}|\kappa^{2}$
is chosen as the model with zero variance, that is $\tau^{2}=\infty$
(note that our definition of $\tau^{2}$ corresponds to $\tau^{-1}$
in \citet{Fuglstad2018}). This means exponential priors for $\kappa^{d/2}$
and for $\tau^{-1}|\kappa^{2}$. The PC prior for M$(2)$ allows the
user to be weakly informative about range and standard deviation of
the spatial activation coefficient maps, by a priori controlling the
lower tail probability for the range $\mathrm{Pr}\left(\rho<\rho_{0}\right)=\xi_{1}$
and the upper tail probability for the marginal variance $\mathrm{Pr}\left(\sigma^{2}>\sigma_{0}^{2}\right)=\xi_{2}$
of the field. By default, we will set $\xi_{1}=\xi_{2}=0.05$,
$\rho_{0}$ to 2 voxel lengths and $\sigma_{0}^{2}$ corresponding
to $5\%$ probability that the marginal standard deviation of the
activity coefficients is larger than $2\%$ of the global mean signal.
See the supplementary material for full details
about the PC prior for the M$(2)$ hyperparameters.

For M$(1)$, PC priors are not straightforward to specify, since the
range and marginal variance are not available for $\nu=-1/2$ in the
continuous space, so we will instead use log-normal priors for $\tau^{2}$
and $\kappa^{2}$, as specified in the supplementary material.

For ICAR$(1)$ and ICAR$(2)$ we use the PC prior for $\tau^{2}$ for Gaussian random effects in \citet[][Section 3.3]{Simpson2017}, and we follow their suggestion for handling the  singular precision matrix. Since these spatial priors do not have a finite marginal variance, we let the PC prior control the marginal variances of $\boldsymbol{\beta}|\mathbf{V}^T \boldsymbol{\beta}=\mathbf{0}$ instead, where $\mathbf{V}$ is the nullspace of the prior precision matrix. These nullspaces are known by construction, and the variances measure deviances beyond the addition of a constant to all voxels for ICAR(1), and beyond the addition of constants and linear trends for ICAR(2). The variances are inversely proportional to $\tau^2$, and we numerically computed them through simulation using a typical brain (from the word object experiment described below) to be $\bar{\sigma}^2=0.29/\tau^2$ for ICAR(1) and $\bar{\sigma}^2=0.76/\tau^2$ for ICAR(2), on average across all voxels. We specify $\sigma_{0}^{2}$ and $\xi_{2}$ so that $\mathrm{Pr}(\bar{\sigma}^2>\sigma_{0}^{2})=\xi_{2}$ and use $\xi_{2}=0.05$ and $\sigma_{0}$ corresponding to 2\% of the global mean signal.

For the anisotropic priors we use log-normal priors for $h_{x}$ and
$h_{y}$ as 
\begin{equation}
\left[\begin{array}{c}
\log h_{x}\\
\log h_{y}
\end{array}\right]\sim\mathcal{N}\left(\mathbf{0},\sigma_{h}^{2}\left[\begin{array}{cc}
1 & -\frac{1}{2}\\
-\frac{1}{2} & 1
\end{array}\right]\right),\label{eq:anisotropicHyperPrior}
\end{equation}
which means that also $\log\left(1/\left(h_{x}h_{y}\right)\right)\sim\mathcal{N}\left(0,\sigma_{h}^{2}\right)$
with correlation $-1/2$ with $\log h_{x}$ and $\log h_{y}$. The
motivation for this prior is that it is centered at the isotropic
model $h_{x}=h_{y}=1$, and it is symmetric with respect to the $x$-,
$y$- and $z$-direction. We will use $\sigma_{h}^{2}=0.01$ as default,
which roughly corresponds to a $(0.8,1.2)$ $95\%$-interval for $h_{x}$.

\subsection{Noise model priors\label{subsec:Noise-model-priors}}

We use priors for the noise model parameters $\boldsymbol{\theta}_{n}=\left\{ \boldsymbol{\lambda},\mathbf{A}\right\}$
that are independent across voxels and across AR parameters within
the same voxel, with $\lambda_{n}\sim\Gamma\left(u_{1},u_{2}\right)$
and $A_{p,n}\sim\mathcal{N}(0,1/\tau_{A}^{2})$,
which is the same prior as in \citet{pennyEtAlSpatialPrior2005}.
Normally we use $u_{1}=10$ and $u_{2}=0.1$, which are the default
values in the SPM software and $\tau_{A}^{2}=10^{-3}$ which is the
value used in \citet{pennyEtAlSpatialPrior2005}. We have seen that
the spatial prior for the AR parameters previously used \citep{Penny2007,Sid??n2017}
gives similar results in practice, which is why we use the computationally
more simple independent prior for $\mathbf{A}$.

\section{Bayesian inference algorithm\label{sec:Methods}}

The fast MCMC algorithm in \citet{Sid??n2017} is not trivially extended
to a 3D model with a Mat\'{e}rn prior as the updating step for $\kappa_{k}^{2}$
conditional on the other parameters requires the computation of log
determinants such as $\log\left|\kappa_{k}^{2}\mathbf{I}+\mathbf{G}\right|$
for various $\kappa_{k}^{2}$. This in general requires the Cholesky
decomposition of $\kappa_{k}^{2}\mathbf{I}+\mathbf{G}$ which has
overwhelming memory and time requirements for large $N$ and would
normally not be feasible for whole-brain analysis. In addition, $\kappa_{k}^{2}$
would require some proposal density for a Metropolis-within-Gibbs-step,
as a conjugate prior is not available. The same problems apply to
the MCMC steps for $h_{x,k}$ and $h_{y,k}$ when using the anisotropic
model. The lack of conjugate priors also makes the spatial VB (SVB)
method in \citet{Sid??n2017} more complicated, as the mean-field
VB approximate marginal posterior of $\kappa_{k}^{2}$ will no longer have a simple closed form.

We instead take an EB approach and optimize the spatial and noise
model parameters $\boldsymbol{\theta}=\left\{ \boldsymbol{\theta}_{s},\boldsymbol{\theta}_{n}\right\} $,
for which we are not directly interested in the uncertainty, with
respect to the log marginal posterior $p\left(\boldsymbol{\theta}|\mathbf{y}\right)$.
Conditional on the posterior mode estimates of $\boldsymbol{\theta}$,
we then sample from the joint posterior of the parameters of interest,
the activation coefficients in $\text{\ensuremath{\beta}}$, from
which we construct posterior probability maps (PPM) of activations.
Optimizing $\boldsymbol{\theta}$ is computationally attractive as
we can use fast stochastic gradient methods (see Section~ \ref{subsec:Parameter-optimization})
tailored specifically for our problem. We also note that VB tends
to underestimate the posterior variance of the hyperparameters \citep{Bishop2006,Solis-Trapala2009,Sid??n2017}. The approximate
posterior for $\boldsymbol{\beta}$ in \citet{Sid??n2017} only depends
on the posterior mean of the hyperparameters, still it gives very small error compared to MCMC. Thus, if EB is seen
as approximating the distribution of each hyperparameter in $\boldsymbol{\theta}$
as a point mass, it might not be much of a restriction compared to
VB.

The marginal posterior of $\boldsymbol{\theta}$ can be computed by
\begin{equation}
p\left(\boldsymbol{\theta}|\mathbf{y}\right)=\left.\frac{p\left(\mathbf{y}|\boldsymbol{\beta},\boldsymbol{\theta}\right)p\left(\boldsymbol{\beta}|\boldsymbol{\theta}\right)p\left(\boldsymbol{\theta}\right)}{p\left(\boldsymbol{\beta}|\mathbf{y},\boldsymbol{\theta}\right)p\left(\mathbf{y}\right)}\right|_{\boldsymbol{\beta}=\boldsymbol{\beta}^{*}},\label{eq:marginal_posterior}
\end{equation}
for arbitrary value of $\boldsymbol{\beta}^{*}$, where all involved
distributions are known in closed form, apart from $p\left(\mathbf{y}\right)$,
but this disappears when taking the derivative of $\log p\left(\boldsymbol{\theta}|\mathbf{y}\right)$
with respect to $\theta_{i}$. In Section~\ref{subsec:Parameter-optimization},
we comprehensibly present the optimization algorithm, but leave the
finer details to the supplementary material. Given the optimal
value $\hat{\boldsymbol{\theta}}$, we will study the full joint
posterior $\boldsymbol{\beta}|\mathbf{y},\hat{\boldsymbol{\theta}}$
of activity coefficients, which is normally the main interest for
task-fMRI analysis. This distribution is a GMRF with mean $\tilde{\boldsymbol{\mu}}$
and precision matrix $\tilde{\mathbf{Q}}$, see details in the supplementary material,
and can be used for example to compute PPMs, as described in Section~\ref{subsec:PPM-computation}.

\subsection{Parameter optimization\label{subsec:Parameter-optimization}}

By using the EB approach with SGD optimization, we avoid the costly
log determinant computations needed for MCMC, since the computation
of the posterior of $\boldsymbol{\theta}$ is no longer needed. Our
algorithm instead uses the gradient of $\log p\left(\boldsymbol{\theta}|\mathbf{y}\right)$
to optimize $\boldsymbol{\theta}$, for which there is a cheap unbiased
estimate. We also use an approximation of the Hessian and other techniques
to obtain an accelerated SGD algorithm as described below.

The optimization of $\boldsymbol{\theta}$ will be carried out iteratively.
At iteration $j$ each $\theta_{i}$ is updated with some step $\Delta\theta_{i}$
as $\theta_{i}^{\left(j\right)}=\theta_{i}^{\left(j-1\right)}+\Delta\theta_{i}^{\left(j\right)}$.
Let $G\left(\theta_{i}^{\left(j-1\right)}\right)=\left.\frac{\partial}{\partial\theta_{i}}\log p\left(\boldsymbol{\theta}|\mathbf{y}\right)\right|_{\boldsymbol{\theta}=\boldsymbol{\theta}^{\left(j-1\right)}}$
denote the gradient and $H\left(\theta_{i}^{\left(j-1\right)}\right)=\left.\frac{\partial^{2}}{\partial\theta_{i}^{2}}\log p\left(\boldsymbol{\theta}|\mathbf{y}\right)\right|_{\boldsymbol{\theta}=\boldsymbol{\theta}^{\left(j-1\right)}}$
denote the Hessian for $\theta_{i}$ (note that we here use the term
Hessian to describe a single number for each $i$, rather than the
full Hessian matrix for $\boldsymbol{\theta}$ which would be too
large to consider). Ideally, one would use the Newton method with $\Delta\theta_{i}^{\left(j\right)}=-G\left(\theta_{i}^{\left(j-1\right)}\right)\slash H\left(\theta_{i}^{\left(j-1\right)}\right)$,
or at least some gradient descent method with $\Delta\theta_{i}^{\left(j\right)}=-\eta G\left(\theta_{i}^{\left(j-1\right)}\right)$,
with some learning rate $\eta$. It turns out that for our model,
this is not computationally feasible in general, since the gradient
depends on various traces on the form $\text{tr}(\tilde{\mathbf{Q}}^{-1}\mathbf{T})$
for some matrix $\mathbf{T}$ with similar sparsity structure as $\tilde{\mathbf{Q}}$.
For small problems, such traces can be computed exactly by first computing
the selected inverse $\tilde{\mathbf{Q}}^{inv}$ of $\tilde{\mathbf{Q}}$
using the Takahashi equations \citep{Takahashi1973,Rue2007,Sid??n2017},
but this is prohibitive for problems of size larger than, say, $KN>10^{5}$.
However, the Hutchinson estimator \citep{Hutchinson1990} gives a
stochastic unbiased estimate of the trace as $\text{tr}(\tilde{\mathbf{Q}}^{-1}\mathbf{T})\approx\frac{1}{N_{s}}\sum_{j=1}^{N_{s}}\mathbf{v}_{j}^{T}\tilde{\mathbf{Q}}^{-1}\mathbf{T}\mathbf{v}_{j}$,
where each $\mathbf{v}_{j}$ is a $N\times1$ vector with independent
random elements $1$ or $-1$ with equal probability. This can be computed without computing $\tilde{\mathbf{Q}}^{-1}$, hence, we can
obtain an unbiased estimate of the gradient, which enables SGD. Using a learning rate $\eta^{\left(j\right)}$ with
the decay properties $\sum_{j}\left(\eta^{\left(j\right)}\right)^{2}<\infty$
and $\sum_{j}\eta^{\left(j\right)}=\infty$ guarantees convergence
to a local optimum \citep{RobbinsMonro,asmussen2007stochastic}.

\begin{algorithm}
\begin{algorithmic}[1]
\Require Initial values $\boldsymbol{\theta}_0$ and parameters $N_{iter},\gamma_{1},\gamma_{2},\eta_{mom},\left\{ \eta^{\left(j\right)}\right\} _{j=1}^{N_{iter}},N_{Polyak},N_{s}$
\For{$j=1$ to $N_{iter}$}
\vspace{1mm}        
\State Estimate the gradient $G\left(\theta_{i}^{\left(j-1\right)}\right)=\left.\frac{\partial}{\partial\theta_{i}}\log p\left(\boldsymbol{\theta}|\mathbf{y}\right)\right|_{\boldsymbol{\theta}=\boldsymbol{\theta}^{\left(j-1\right)}}$ for all $i$
\vspace{1mm}        
\State Estimate the approximate Hessian $\tilde{H}\left(\theta_{i}^{\left(j-1\right)}\right)=\left.E_{\boldsymbol{\beta}|\mathbf{Y},\boldsymbol{\theta}}\left[\frac{\partial^{2}\log p\left(\boldsymbol{\theta}|\mathbf{y},\boldsymbol{\beta}\right)}{\partial\theta_{i}^{2}}\right]\right|_{\boldsymbol{\theta}=\boldsymbol{\theta}^{\left(j-1\right)}}$ \newline \-\ ~~~~for all $i$
\vspace{1mm}
\State Average $\bar{G}\left(\theta_{i}^{\left(j-1\right)}\right)=\gamma_{1}\bar{G}\left(\theta_{i}^{\left(j-2\right)}\right)+\left(1-\gamma_{1}\right)G\left(\theta_{i}^{\left(j-1\right)}\right)$ for all $i$
\vspace{1mm}
\State Average $\bar{H}\left(\theta_{i}^{\left(j-1\right)}\right)=\gamma_{2}\bar{H}\left(\theta_{i}^{\left(j-2\right)}\right)+\left(1-\gamma_{2}\right)\tilde{H}\left(\theta_{i}^{\left(j-1\right)}\right)$ for all $i$
\vspace{1mm}
\State Compute $\boldsymbol{\theta}_{s}$ step sizes $\Delta\theta_{s,i}^{\left(j\right)}=\eta_{mom}\Delta\theta_{s,i}^{\left(j-1\right)}-\frac{\eta^{\left(j\right)}}{\bar{H}\left(\theta_{s,i}^{\left(j-1\right)}\right)}\bar{G}\left(\theta_{s,i}^{\left(j-1\right)}\right)$ for all $i$
\vspace{1mm}
\State Compute $\boldsymbol{\theta}_{n}$ step sizes $\Delta\theta_{n,i}^{\left(j\right)}=\eta_{n}\eta^{\left(j\right)}\bar{G}\left(\theta_{n,i}^{\left(j-1\right)}\right)$ for all $i$
\vspace{1mm}
\State Take step $\theta_{i}^{\left(j\right)}=\theta_{i}^{\left(j-1\right)}+\Delta\theta_{i}^{\left(j\right)}$ for all $i$
\vspace{1mm}
\EndFor
\vspace{1mm}
\State Return $\hat{\boldsymbol{\theta}}=\frac{1}{N_{Polyak}}\sum_{i=N_{iter}-N_{Polyak}+1}^{N_{iter}}\boldsymbol{\theta}^{\left(j\right)}$
\end{algorithmic}\caption{Parameter optimization algorithm\label{Alg:SGD}}
\end{algorithm}

To speed up the convergence of the spatial hyperparameters $\boldsymbol{\theta}_{s}$, in addition to SGD, we use an approximation of the Hessian
$\tilde{H}\left(\theta_{i}^{\left(j-1\right)}\right)=\left.E_{\boldsymbol{\beta}|\mathbf{Y},\boldsymbol{\theta}}\left[\frac{\partial^{2}\log p\left(\boldsymbol{\theta},\boldsymbol{\beta}|\mathbf{y}\right)}{\partial\theta_{i}^{2}}\right]\right|_{\boldsymbol{\theta}=\boldsymbol{\theta}^{\left(j-1\right)}}$, which improves the step length \citep{Lange1995,Bolin2018}. This
is also stochastically estimated using Hutchinson estimators of various
traces, for example\newline $\text{tr}\left(\mathbf{K}_{k}^{-1}\mathbf{K}_{k}^{-1}\right)\approx\frac{1}{N_{s}}\sum_{j=1}^{N_{s}}\mathbf{v}_{j}^{T}\mathbf{K}_{k}^{-1}\mathbf{K}_{k}^{-1}\mathbf{v}_{j}$,
where $\mathbf{K}_{k}^{-1}\mathbf{v}_{j}$ needs only to be computed
once for each $j$. The final optimization algorithm presented in
Algorithm~\ref{Alg:SGD} also uses: i) averaging over iterations
(line 4-5), which gives robustness to the stochasticity in the estimates,
ii) momentum (line 6), which gives acceleration in the relevant direction
and dampens oscillations, and iii) Polyak averaging (line 10), which
reduces the error in the final estimate of $\boldsymbol{\theta}$ by assuming that the last few iterations are just stochastic deviations from the mode. In practice, all parameters are reparametrized to be defined over the whole real line, see the supplementary material for details.

Some practical details about the optimization algorithm follow. Normally, the maximum number of iterations used is $N_{iter}=200$ the averaging parameters are $\gamma_{1}=0.2$ and $\gamma_{2}=0.9$, the momentum parameter is $\eta_{mom}=0.5$, the learning rate decreases as $\eta^{\left(j\right)}=\frac{0.9}{0.1\max\left(0,j-100\right)+1}$, the learning rate for $\boldsymbol{\theta}_{n}$ is $\eta_{n}=0.001$, we use $N_{Polyak}=10$ values for the Polyak averaging, and $N_{s}=50$ samples for the Hutchinson estimator. These parameter values led to desirable behavior when monitoring the optimization algorithm on different datasets. We initialize the noise parameters by pre-estimating the
model without the spatial prior, and the spatial parameters are normally
initialized near to the prior mean. We also start the algorithm by
running a few (normally 5) iterations of SGD with small learning rate. In each iteration, we also check the sign of the approximate Hessian to prevent steps in the direction opposite to the gradient, which could happen due to the stochasticity or local non-convexity, and change the sign if necessary. 

The computational bottleneck of the algorithm is the computation of
large matrix solves, such as $\tilde{\mathbf{Q}}^{-1}\mathbf{v}_{j}$,
involving the multiplication of the inverse of large sparse precision
matrices with a vector. This is carried out using the fast preconditioned
conjugate gradient (PCG) iterative solvers of the corresponding equation
system $\tilde{\mathbf{Q}}\mathbf{u}=\mathbf{v}_{j}$, as described in \citet{Sid??n2017},
where it is also illustrated that PCG is numerous times faster than
directly solving the equation system using the Cholesky decomposition
in these models. In addition, since the Hutchinson estimator requires many matrix solves in each iteration, these can performed in parallel on separate cores, giving great speedup.

\subsection{PPM computation\label{subsec:PPM-computation}}

PPMs are used to summarize the posterior information about active
voxels. The marginal PPM is computed for each voxel $n$ and contrast
vector $\mathbf{c}$ as $P(\mathbf{c}^{T}\mathbf{W}_{\cdot,n}>\gamma|\mathbf{y},\hat{\boldsymbol{\theta}})$,
for some activity threshold $\gamma$, recalling that $\text{vec}(\mathbf{W}^{T})=\boldsymbol{\beta}$
are the activity coefficients. Since $\boldsymbol{\beta}|\mathbf{y},\hat{\boldsymbol{\theta}}\sim\mathcal{N}(\tilde{\boldsymbol{\mu}},\tilde{\mathbf{Q}}^{-1})$
is a GMRF (see the supplementary material),
it is clear that $\mathbf{c}^{T}\mathbf{W}_{\cdot,n}|\mathbf{y},\hat{\boldsymbol{\theta}}$
is univariate Gaussian and the PPM would be simple to compute for
any \textbf{$\mathbf{c}$} if we only had access to the mean and covariance
matrix of $\mathbf{W}_{\cdot,n}|\mathbf{y},\hat{\boldsymbol{\theta}}$
for every voxel $n$. The mean is known, but the covariance matrix
is non-trivial to compute, since the posterior is parameterized using
the precision matrix. We therefore use the simple Rao-Blackwellized
Monte Carlo (simple RBMC) estimate in \citet{Siden2018} to approximate
this covariance matrix using
\begin{align}
\begin{split}
\text{Var}\left(\mathbf{W}_{\cdot,n}|\mathbf{y},\hat{\boldsymbol{\theta}}\right)=\,&\text{E}_{\mathbf{W}_{\cdot,-n}}\left[\text{Var}\left(\mathbf{W}_{\cdot,n}|\mathbf{W}_{\cdot,-n},\mathbf{y},\hat{\boldsymbol{\theta}}\right)\right]+\\&\text{Var}_{\mathbf{W}_{\cdot,-n}}\left[\text{E}\left(\mathbf{W}_{\cdot,n}|\mathbf{W}_{\cdot,-n},\mathbf{y},\hat{\boldsymbol{\theta}}\right)\right],\label{eq:simpleRBMC}
\end{split}
\end{align}
where $-n$ denotes all voxels but $n$. The first term of the right
hand side is cheaply computed as the inverse of a $K\times K$ subblock
of $\tilde{\mathbf{Q}}$. The second term is approximated by producing
$N_{RBMC}$ samples $\mathbf{W}^{\left(j\right)}$ from $\mathbf{W}|\mathbf{y},\hat{\boldsymbol{\theta}}$,
computing $\text{E}\left(\mathbf{W}_{\cdot,n}|\mathbf{W}_{\cdot,-n}^{\left(j\right)},\mathbf{y},\hat{\boldsymbol{\theta}}\right)$
analytically for each $j$, and computing the Monte Carlo approximation
of the variance. We leave out the details for brevity, but this computation
is straightforward due to the Gaussianity and computationally cheap
due to the sparsity structure of $\tilde{\mathbf{Q}}$. The PPM computation time will normally be dominated by the GMRF sampling, which is done
using the technique invented in \citet{Papandreou2010} and summarized
in \citet[Algorithm 2]{Sid??n2017}, and requires solving $N_{RBMC}$
equation systems involving $\tilde{\mathbf{Q}}$ using PCG.

\section{Results\label{sec:Results}}

This section is divided into three subsections. We start by analysing simulated fMRI data, to demonstrate the EB method's capability to estimate the true parameters, and to visualise the differences between the spatial priors in a controlled setting. We then consider real fMRI data from two different experiments, and compare the results when using different spatial priors by: inspecting the posterior activity maps, examining the plausibility of new random samples from the spatial priors, and evaluating the predictive performance using cross-validation. In the last subsection, we evaluate approximation error of the EB method by comparing to full MCMC. All computations are performed using our own Matlab code which is linked to in the end of Section~\ref{sec:intro}.

\subsection{Simulated data}

We consider a simulated dataset that is randomly generated using the anisotropic Matérn (A-M$(2)$) prior with fixed hyperparameters. The size and shape of the brain is taken from the word object dataset, described below. We first simulate four different 3D fields of activity coefficients $\boldsymbol{\beta}=\text{vec}(\mathbf{W}^T$) using four A-M$(2)$ priors with different hyperparameters. We select the hyperparameters to highlight different spatial characteristics and name the four composed conditions: \emph{Weak} (Small activation magnitude, low $\sigma$), \emph{Short range} (Short spatial range $\rho$), \emph{Long range} (Long spatial range $\rho$) and \emph{Anisotropic} ($h_x\neq1$ and $h_y\neq1$). A summary of the selected hyperparameters can be seen in Table~\ref{tab:SimulatedHyperparameters}, and one slice of the activity coefficient maps are shown in Fig.~\ref{fig:trueSimulatedMaps}.

We then use the simulated $\boldsymbol{\beta}$ coefficients to generate a time series of fMRI volumes. In order to do this, we also borrow the following variables from the word object dataset: the columns of the design matrix $\mathbf{X}$ corresponding to the HRF and intercept, the estimated values for the elements in $\boldsymbol{\beta}$ corresponding to the intercept, and estimated values for the noise variables $\boldsymbol{\lambda}$ and $\mathbf{A}$. The generated dataset has $T=100$ time points.

We use the EB method to estimate the model with the different spatial priors described in Table~\ref{tab:Precision-matrices}. The estimated hyperparameters for the A-M$(2)$ can be seen in Table~\ref{tab:SimulatedHyperparameters}. The estimates indicate that the method manages to recover the true parameter values fairly well, especially when the signal is strong (high $\sigma$) and the range is short (small $\rho$). However, when the signal is weak the estimates are more affected by the noise, and when the range is long there is bias from boundary effects since a long range is harder to infer on a limited domain. The anisotropic parameters $h_x$ and $h_y$ are correctly estimated in general, but the anisotropy is somewhat underestimated for the \emph{Anisotropic} condition, due to shrinkage from the prior. 

\begin{table}
\caption{Spatial hyperparameters of the anisotropic Matérn model (A-M$(2)$), used for the four conditions when simulating the data, and estimated values for the same data, computed using the EB method. Spatial range $\rho=2/\kappa$ (in mm),  marginal standard deviation $\sigma$ (see
Eq.~\ref{eq:MaternMarginalVariance}) and anisotropic parameters $h_x$ and $h_y$. \label{tab:SimulatedHyperparameters}}

\begin{tabular}{c|cccc|cccc}
\toprule
 & \multicolumn{4}{c}{True values} & \multicolumn{4}{c}{A-M$(2)$ estimates}\tabularnewline
\midrule
Condition & $\rho$ & $\sigma$ & $h_x$ & $h_y$ & $\rho$ & $\sigma$ & $h_x$ & $h_y$ \tabularnewline
\midrule
Weak & 18 & 1 & 1 & 1 & 15.0 & 0.96 & 1.06 & 1.01 \tabularnewline

Short range & 9 & 2 & 1 & 1 & 9.1 & 1.97 & 0.96 & 1.02 \tabularnewline

Long range & 60 & 2 & 1 & 1 & 48.7 & 1.88 & 0.90 & 1.07 \tabularnewline

Anisotropic & 18 & 1 & 0.5 & 2 & 16.9 & 1.05 & 0.60 & 1.67 \tabularnewline
\bottomrule
\end{tabular}
\end{table}

\begin{figure}
\includegraphics[width=0.75\linewidth,trim={1mm 7mm 2mm 2mm},clip]{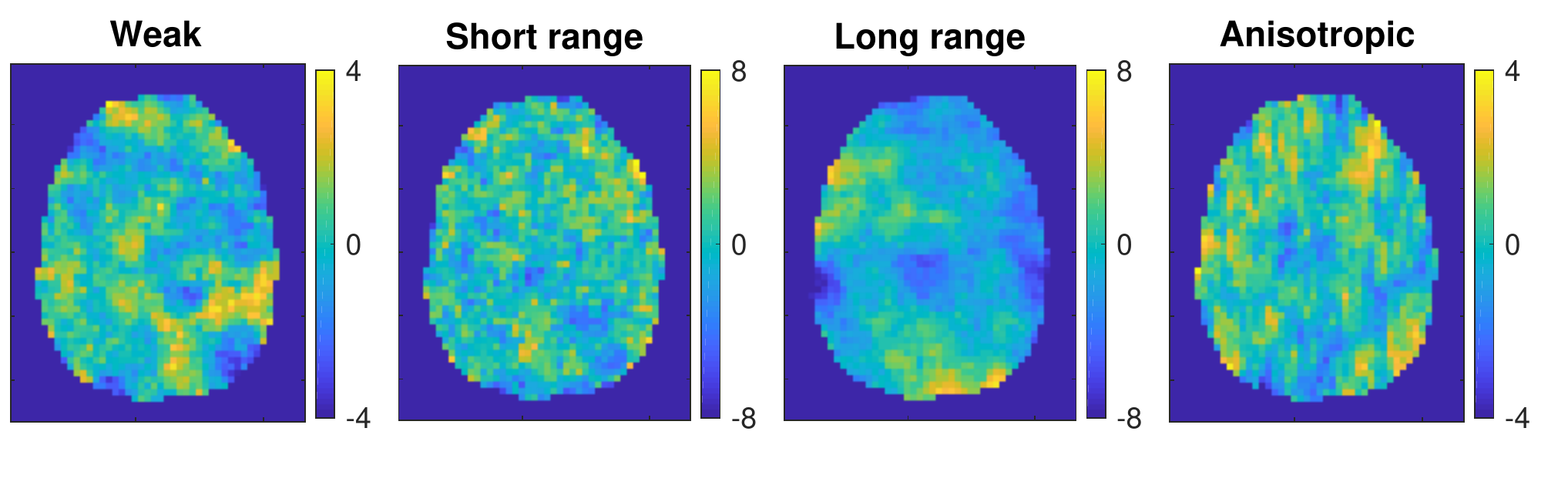}

\caption{True activity coefficients $\boldsymbol{\beta}$ for the four conditions of the simulated dataset.}
\label{fig:trueSimulatedMaps}
\end{figure}

\begin{figure}
\includegraphics[width=1\linewidth,trim={1mm 8mm 2mm 4mm},clip]{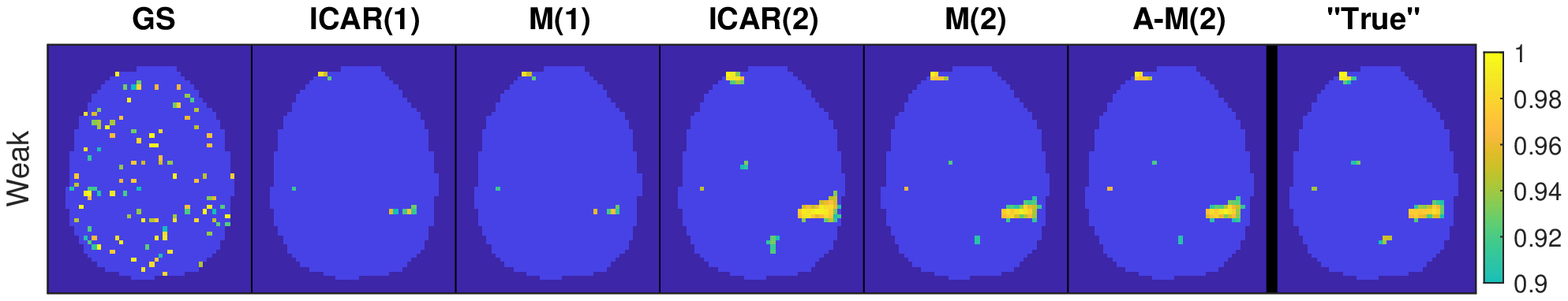}
\includegraphics[width=1\linewidth,trim={1mm 8mm 2mm 9mm},clip]{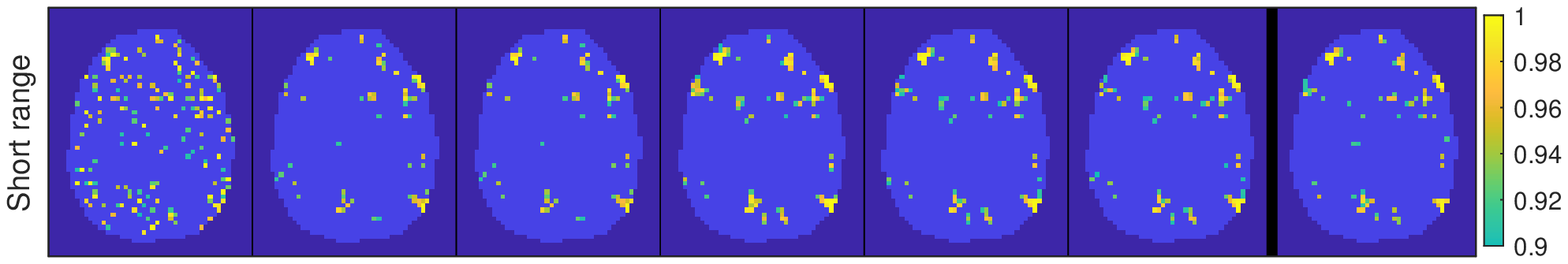}
\includegraphics[width=1\linewidth,trim={1mm 8mm 2mm 9mm},clip]{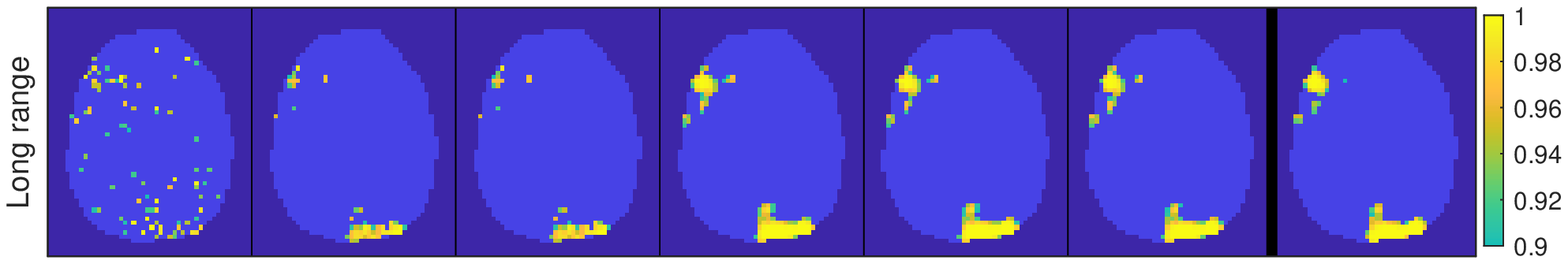}
\includegraphics[width=1\linewidth,trim={1mm 8mm 2mm 9mm},clip]{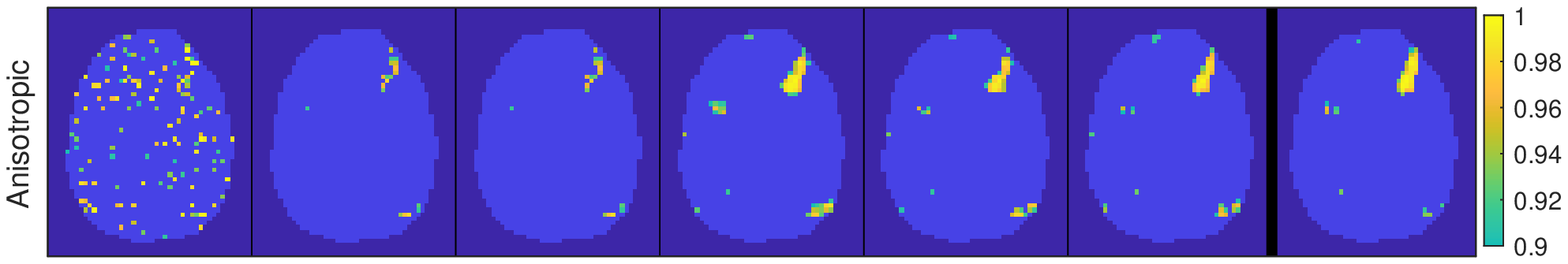}

\caption{PPMs for the four conditions of the simulated dataset, estimated with different spatial priors. The last column ``True'' shows the results when the true A-M$(2)$ hyperparameters used to generate the data is used for estimation. The PPMs show probabilities of exceeding $0.2\%$ of the global mean signal, thresholded at $0.9$. See the definition of the spatial priors in Table~\ref{tab:Precision-matrices}. The corresponding posterior means are shown in the supplementary material.}
\label{fig:simulatedMaps}
\end{figure}

Fig.~\ref{fig:simulatedMaps} shows the resulting PPMs for the different spatial priors, with hyperparameters estimated by EB, for the same slice as in Fig.~\ref{fig:trueSimulatedMaps}. The last column also shows the ``true'' PPMs obtained by using the A-M$(2)$ with the hyperparameters used to generate the data. We see how the non-spatial GS prior leads to cluttered PPMs which bear little resemblance with the true activity coefficients. We note that the first-order ICAR$(1)$ and M$(1)$ priors, with smoothness $\alpha=1$, tend to show smaller activity patterns than the second-order priors with $\alpha=2$, except for perhaps the \emph{Short range} condition. The differences between the second-order priors ICAR$(2)$, M$(2)$ and A-M$(2)$ are quite subtle, but for the \emph{Weak} and \emph{Anisotropic} conditions ICAR$(2)$ shows some signs of over-smoothing, resulting in slightly larger activity regions compared to the truth. As expected, M$(2)$ and A-M$(2)$ show little discrepancy for the first three isotropic conditions, but for the \emph{Anisotropic} condition A-M$(2)$ is to some degree closer to the truth.

\subsection{Real data}

\subsubsection{Description of the data\label{subsec:data-description}}

We evaluate the method on two different real fMRI datasets, the face
repetition dataset \citep{Henson2002} previously examined in \citet{pennyEtAlSpatialPrior2005,Sid??n2017},
and the word object dataset \citep{Duncan2009}.
The face repetition dataset is available at SPM's homepage (\url{http://www.fil.ion.ucl.ac.uk/spm/data/face\_rep/})
and the word object dataset is available at OpenNEURO (\url{https://openneuro.org/datasets/ds000107/versions/00001})
\citep{poldrack2017openfmri}. 
Both experiments have four conditions or subject tasks. Thus, the design matrix $\mathbf{X}$ for both datasets has $K=15$ columns, with column $(1,3,5,7)$ corresponding to the standard canonical HRF convolved with the different task paradigms, column $(2,4,6,8)$ corresponding to the HRF derivative, column $9$ to $14$ corresponding to head motion parameters and the last column corresponding to the intercept. 

The face repetition dataset was aqcuired during an event-related experiment, where greyscale images of non-famous and famous faces were presented to the subject for 500 ms. The four conditions in the dataset corresponds to the first and second time a non-famous or famous face was shown. The contrast studied below ``mean effect of faces'' is the average of the HRF regressors, that is $\mathbf{c}
^T \mathbf{W}_{\cdot,n}=(W_{1,n}+W_{3,n}+W_{5,n}+W_{7,n})/4$, and the presented PPMs can therefore be interpreted as showing brain regions involved in face processing. The dataset was preprocessed using the same steps as in \citet{pennyEtAlSpatialPrior2005} using
SPM12 (including motion correction, slice timing correction and normalization
to a brain template, but no smoothing), and small isolated clusters
with less than 400 voxels were removed from the brain mask. The resulting
mask has $N=57184$ voxels and there are $T=351$ volumes.

The word object experiment also has conditions that correspond to visual stimuli: written words, pictures of common objects, scrambled pictures of the same objects, and consonant letter strings, which were presented to the subject for 350 ms according to a block-related design. For the word object data, preprocessing consisted only of motion correction
and removal of isolated clusters of voxels, as the slice time information
was not available. We selected subject 10, which had relatively little
head motion, and the resulting brain mask has $N=41486$ voxels and
the number of volumes is $T=166$.

For both datasets, the voxels are
of size $3\times3\times3$ mm and the global mean signal is computed as the average value across all voxels in the brain mask and all volumes, and the activity threshold $\gamma$ used in the PPM computation is related to this quantity. 

\subsubsection{Posterior results}

We estimate the models with the different spatial priors for the two real datasets using the EB method, and present the resulting PPMs in Fig.~\ref{fig:realPPMs}.
As for the simulated dataset, we observe cluttered PPMs for the non-spatial GS prior, and in general the priors with $\alpha=1$ (ICAR$(1)$ and M$(1)$) lead to substantially smaller activity regions compared to the priors with $\alpha=2$. Given the same $\alpha$, the differences between the Matérn and ICAR priors do not seem as striking, but for the word object data, the ICAR$(2)$ prior produces an activity region in the left-hand side of the brain that is much smaller for the M$(2)$ and A-M$(2)$ priors.

\begin{figure}
\includegraphics[width=1\linewidth,trim={1mm 3mm 2mm 2mm},clip]{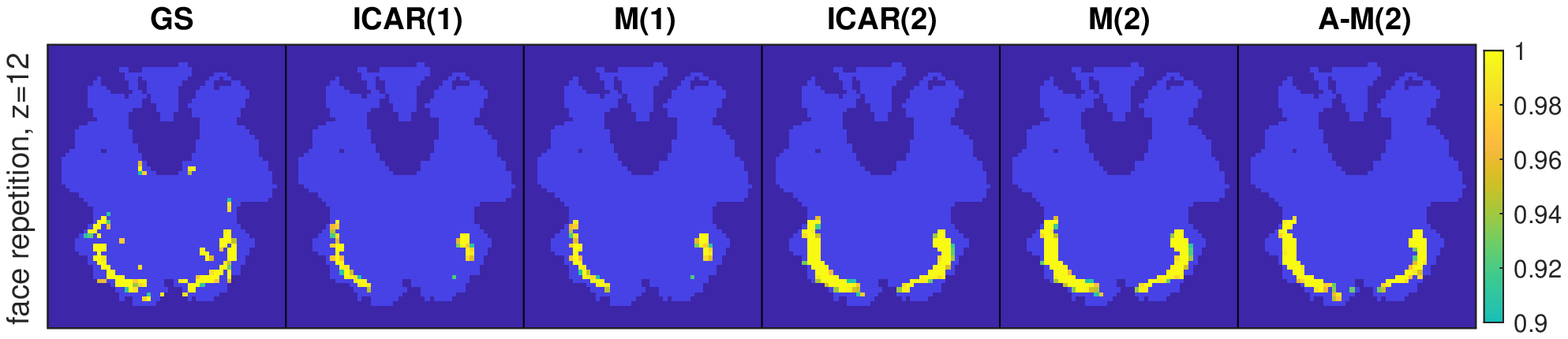}
\includegraphics[width=1\linewidth,trim={1mm 6mm 2mm 7mm},clip]{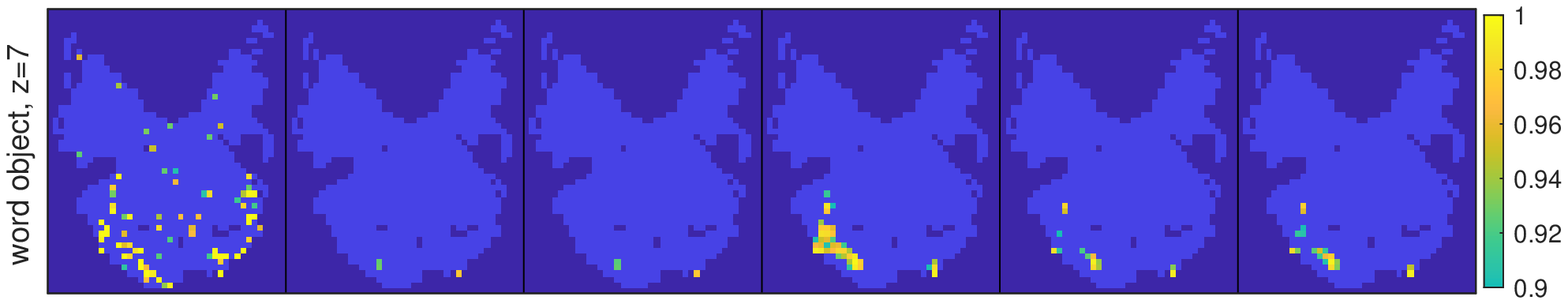}
\includegraphics[width=1\linewidth,trim={1mm 6mm 2mm 10mm},clip]{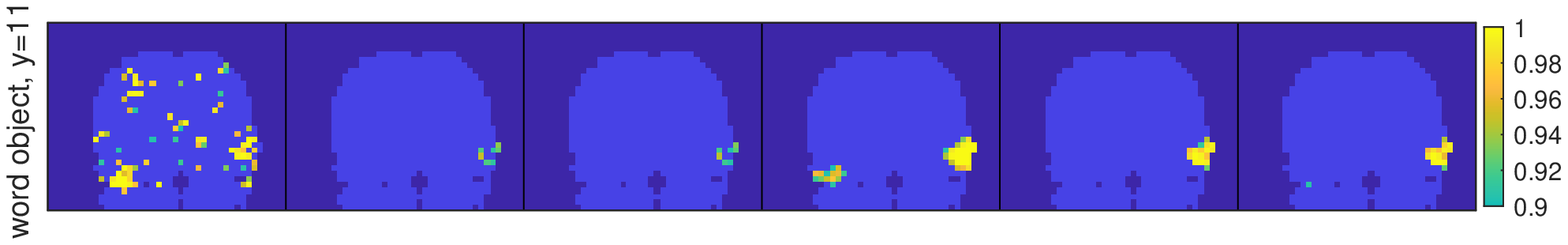}

\caption{PPMs for the two real datasets, when using different spatial priors, thresholded at $0.9$. The spatial priors are summarised in Table~\ref{tab:Precision-matrices}. The top row shows axial slice 12 of the face repetition dataset, and the middle and bottom rows show axial slice 7 and coronal slice 11 of the word object dataset. The face repetition PPMs consider the contrast ``mean effect of faces'' and show probabilities of effect sizes exceeding $1\%$
of the global mean signal. The word object PPMs consider the first condition ``Words'' and show probabilities of effect sizes exceeding $0.5\%$
of the global mean signal. The corresponding posterior means are shown in the supplementary material.}
\label{fig:realPPMs}
\end{figure}

The use of second order Matérn priors enables simple interpretations of the spatial properties of the inferred activity coefficient fields. We report the estimated hyperparameters when using the A-M$(2)$ prior for the four conditions in respective dataset in Table~\ref{tab:Hyperparameters}. The results show that the face repetition data activity patterns have longer spatial ranges (higher $\rho$) and generally larger magnitudes (higher $\sigma$), compared to the word object data. The anisotropic parameters indicate stronger dependence in the $z$-direction (between slices) for the face repetition data, while the opposite is true for the word object data.

\begin{table}
\caption{Estimated spatial hyperparameters for the A-M$(2)$ prior by the EB method, for the different datasets
and conditions. Spatial range $\rho=2/\kappa$ (in mm),  marginal standard deviation $\sigma$ (see
Eq.~\ref{eq:MaternMarginalVariance}) and anisotropic parameters $h_x$ and $h_y$. \label{tab:Hyperparameters}}

\begin{tabular}{c|cccc}
\toprule
\multicolumn{5}{c}{Face repetition data}\tabularnewline
\midrule 
Condition & $\rho$ & $\sigma$ & $h_x$ & $h_y$ \tabularnewline
\midrule 
Non-famous 1 & 62.9 & 2.36  & 0.72 & 0.75 \tabularnewline

Non-famous 2 & 58.7 & 2.40 & 0.73 & 0.73 \tabularnewline

Famous 1 & 59.5 & 2.28 & 0.70 & 0.74 \tabularnewline

Famous 2 & 47.0 & 1.98 & 0.79 & 0.68 \tabularnewline
\bottomrule
\end{tabular}~%
\hspace{3mm}
\begin{tabular}{c|cccc}
\toprule
\multicolumn{5}{c}{Word object data}\tabularnewline
\midrule 
Condition & $\rho$ & $\sigma$ & $h_x$ & $h_y$\tabularnewline
\midrule 
Words & 10.5 & 1.07 & 1.21 & 1.11\tabularnewline

Objects & 16.0 & 0.93 & 1.13 & 1.08\tabularnewline

Scrambled & 21.0 & 1.24 & 1.18 & 1.16\tabularnewline

Consonant & 11.7 & 2.09 & 1.14 & 1.23\tabularnewline
\bottomrule
\end{tabular}
\end{table}

The observed differences between the datasets could be explained by differences in the studied subjects and tasks, but is likely as well an effect from differences in scanner properties and that the slice timing and normalization preprocessing steps impose some smoothness for the face repetition data. The latter are spatial properties that would preferably be modelled in the noise rather than in the activity patterns, and we view improved, computationally efficient spatial noise models for fMRI data as important future work.

Nevertheless, the ability to flexibly estimate and indicate different spatial properties is indeed a great advantage of the second order Matérn models. Furthermore, these Matérn models correspond to exponential autocorrelation
functions, whose fat tails resemble the empirical autocorrelation
functions for fMRI data found in \citet[Supplementary Fig. 17]{Eklund2016} and \citet[Fig. 3]{Cox2017}.

\subsubsection{Prior simulation}

To better understand the meaning of the different priors in practice, Fig.~\ref{fig:priorSamples} displays samples from the spatial priors using the estimated hyperparameters for the first regressor of the different datasets. The M$(1)$ and ICAR$(1)$ priors produce fields that vary quite rapidly, while the second order priors give realizations that are more smooth. For the word object dataset we note that the short estimated range for M$(2)$ gives a sample with much faster variability than the sample from ICAR$(2)$, which looks unrealistically smooth. This illustrates the problem with using the infinite range ICAR$(2)$ prior for a dataset where the inherent range is much shorter.
\begin{figure}
\includegraphics[width=\linewidth,trim={1mm 8mm 2mm 1mm},clip]{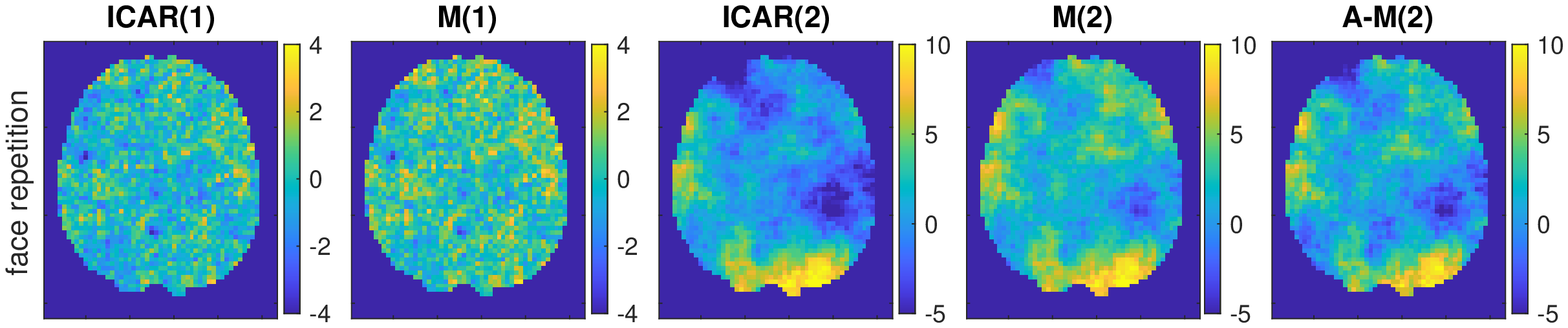}
\includegraphics[width=\linewidth,trim={1mm 8mm 2mm 8mm},clip]{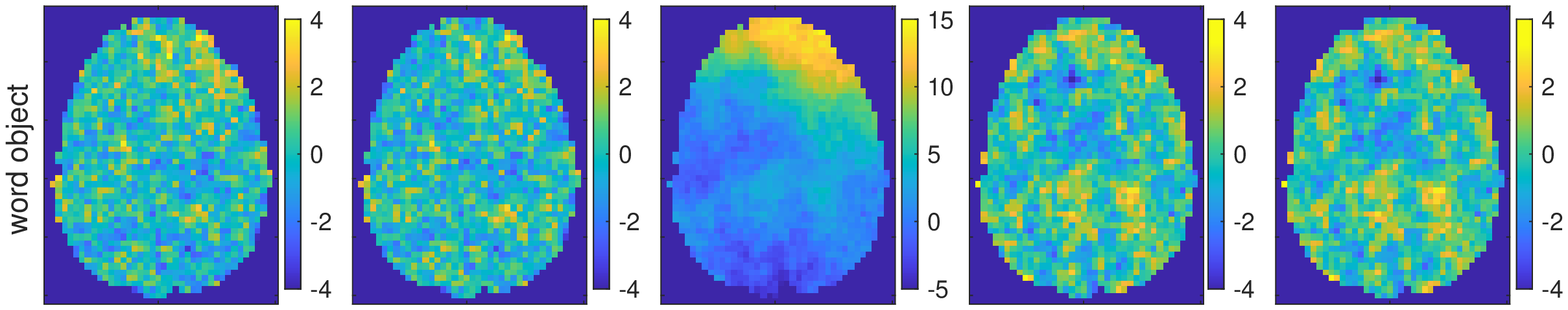}

\caption{Random samples from the different spatial priors using the estimated hyperparameters for the first regressor of the different datasets. The same seed has been used for the same $\alpha$ and dataset.}
\label{fig:priorSamples}
\end{figure}

\subsubsection{Cross-validation}

Many studies, including this one, evaluate models for fMRI data by
displaying the estimated brain activity maps and deciding whether
they look plausible or not. A more scientifically sound approach would
be to compare models based on their ability to predict the values
of unseen data points, which is the standard procedure in many other
statistical applications. The problem for fMRI data is that the main
object of interest, the set of activity coefficients $\mathbf{W}$
corresponding to activity related regressors, is not directly observable,
but only indirectly through the observed noisy BOLD signal $\mathbf{Y}$.
This makes direct comparison to ground truth activation impossible.
We will here attempt to evaluate the performance of the spatial priors
for brain activity by measuring the out-of-sample predictive performance
by computing various prediction error scores on $\mathbf{Y}$ instead.
We cannot, however, expect to find large differences between the different
priors, as only a small fraction of the signal is explained by brain
activation; most is explained by the intercept and various noise sources.

We compute the out-of-sample fit using CV while repeatedly leaving out 90\% of the voxels randomly over the whole brain, and compare the estimated and actual
signal $\mathbf{Y}$ in those voxels. In order to focus the comparison on the evaluation of the spatial priors, we must compute the errors in a slightly more cumbersome way than normal, which is explained in the supplementary material, to reduce the impact of the noise model, head motion and intercept regressors.

\begin{figure}
\includegraphics[width=1\linewidth,trim={40mm 0mm 35mm 1mm},clip]{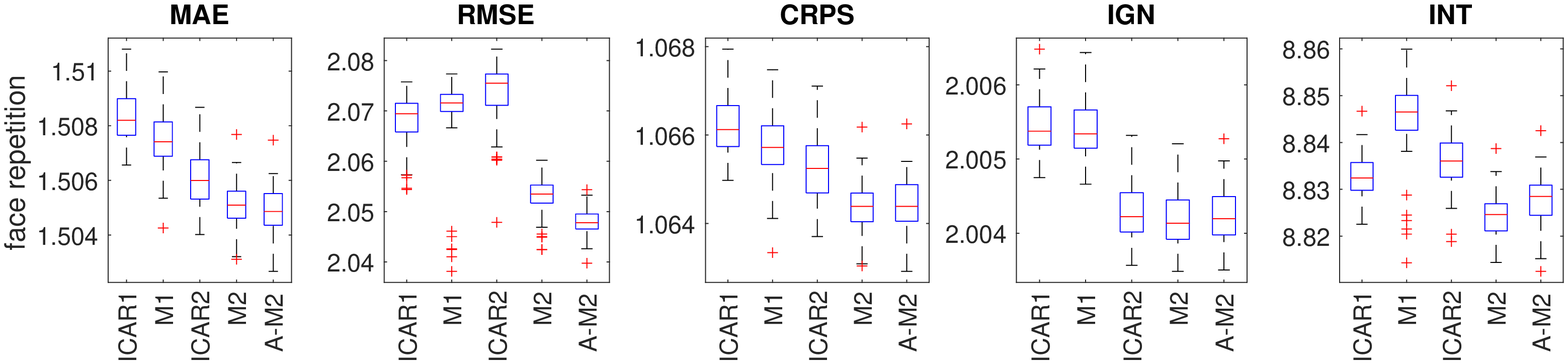}
\includegraphics[width=1\linewidth,trim={40mm 0mm 35mm 1mm},clip]{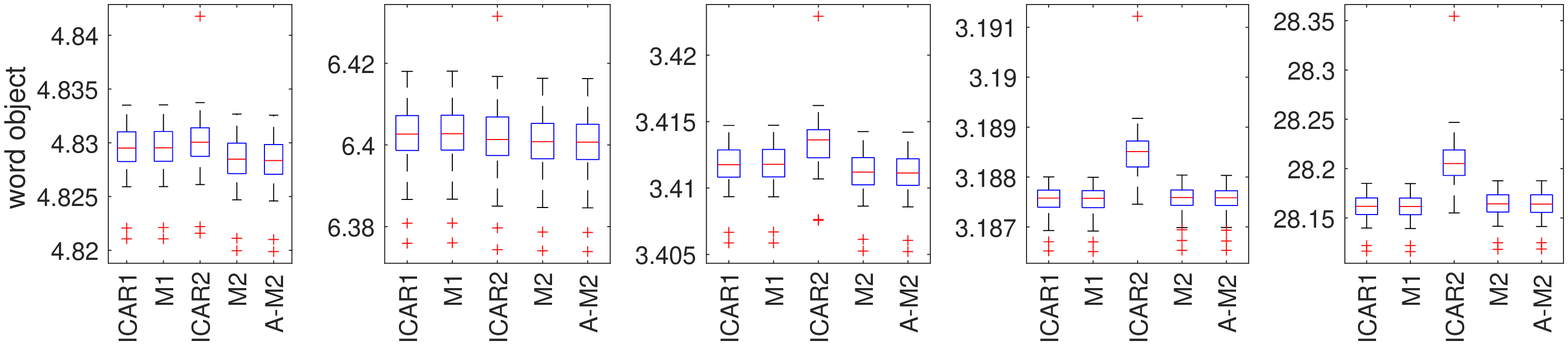}

\caption{Cross-validation scores computed on 90\% left out voxels for the two datasets, comparing the different spatial priors. The scores are computed as means across voxels, and presented in negatively oriented forms, so that smaller values are always better. The boxplots reflect the variation in 50 random sets of left out voxels. Additional results in the supplementary material shows these scores also for in-sample fit and 50\% left out voxels.}
\label{fig:BoxPlotsCV}
\end{figure}

We use the mean absolute error (MAE) and root mean square error (RMSE)
to evaluate the predicted mean of $\mathbf{Y}$ for each prior, and
the mean continuous ranked probability score (CRPS), the mean ignorance score
(IGN, also known as the logarithmic score) and the mean interval score
(INT) to evaluate the whole predictive distribution for $\mathbf{Y}$.
All these scores are all examples of proper scoring rules \citep{Gneiting2007},
which encourage the forecaster to be honest and the expected score
is maximized when the predictive distribution equals the generative
distribution of the data points. Since the predictive distribution
is Gaussian given the hyperparameters, all the scores can be computed
using simple formulas, see the supplementary material.

The results can be seen in Fig.~\ref{fig:BoxPlotsCV}. For the face repetition data, we note that the second order Matérn priors (M$(2)$ and A-M$(2)$) perform better than the other priors in all cases. For the word object data the differences between different priors are smaller, which can probably be explained by the higher noise level and shorter spatial correlation range in this dataset, but the second order Matérn priors are generally among the best. The absolute differences between the different priors may seem small, but one must remember that most of the error comes from noise that is unrelated to the brain activity, making it hard for a spatial activity prior to substantially reduce the error. The large RMSE for the ICAR$(2)$ prior for the face repetition data indicates that this prior can give relatively large out-of-sample errors, possibly due to over-smoothing.

\subsection{Evaluation of the EB method and comparison to MCMC\label{subsec:EB-evaluation}}

One of the most challenging aspects with our work has been in the development of the EB method, summarised in Algorithm~\ref{Alg:SGD}, and in finding optimization parameters that result in stable and fast convergence in the optimization of the spatial hyperparameters.
The convergence behaviour for the M$(2)$ prior is depicted in Fig.~\ref{fig:Convergence}. The hyperparameter optimization trajectories in Fig.~\ref{fig:Convergence}a suggest that the parameters reach the right level in about 100 iterations.
The results presented in this paper are all, more conservatively, after 200 iterations of optimization, but future work could include coming up with some automatic convergence criterion, based on the change of some parameters over the iterations.
Fig.~\ref{fig:Convergence}b shows how the PPM of the word object data converges. The computing time on a computing cluster, with two 8-core (16 threads) Intel Xeon E5-2660 processors at 2.2 GHz, was 3.0h until convergence (100 iterations). This time cannot be directly compared to  MCMC, as MCMC is not computationally feasible for the M$(2)$ prior, however, when using the computationally cheaper (more sparse) ICAR$(1)$ prior in the analysis below, the computation time was almost a week.

\begin{figure}
\subfloat[]{\includegraphics[width=.92\linewidth,trim={10mm 4mm 0mm 5mm},clip]{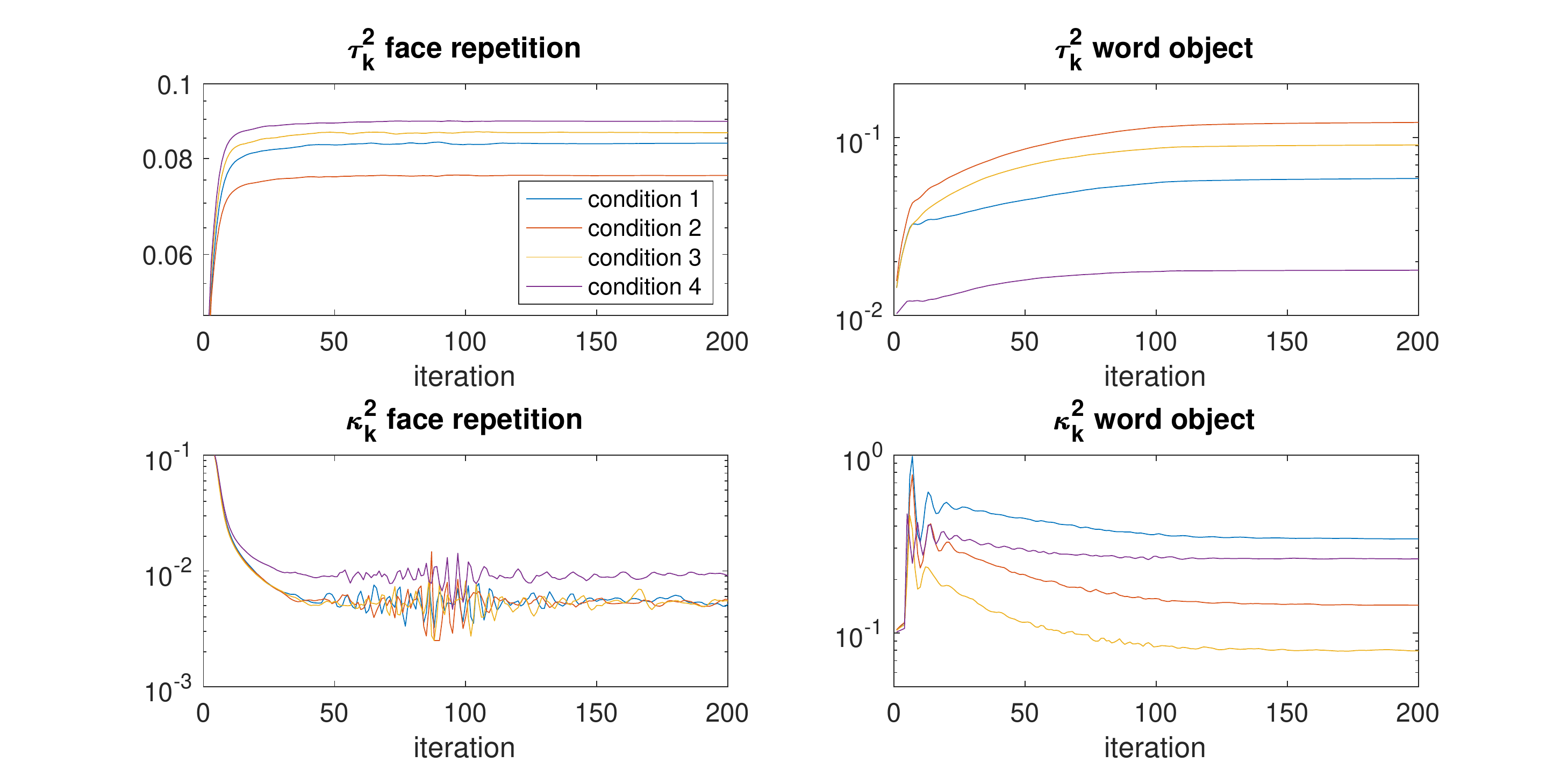}
}

\subfloat[]{\includegraphics[width=0.75\linewidth,trim={1mm 2mm 2mm 1mm},clip]{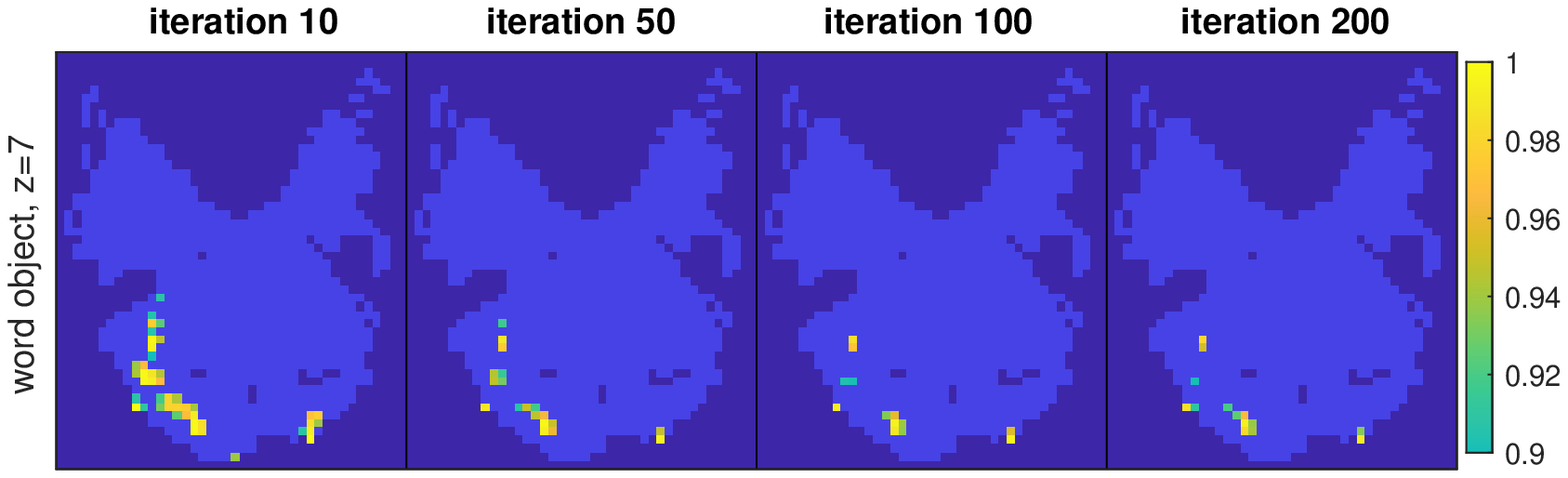}

}

\caption{Convergence of the EB method for the M$(2)$ prior. (a) The hyperparameters $\tau_{k}^{2}$ and $\kappa_{k}^{2}$ corresponding to different conditions over the iterations of the Algorithm~\ref{Alg:SGD}, when using the M$(2)$ prior for the face repetition data (left) and word object data (right). (b) PPM for the word object data after 10, 50, 100 and 200 iterations, where the last is the same as in Fig.~\ref{fig:realPPMs}.}
\label{fig:Convergence}
\end{figure}

\begin{figure}
\includegraphics[width=0.9\linewidth,trim={1mm 6mm 4mm 8mm},clip]{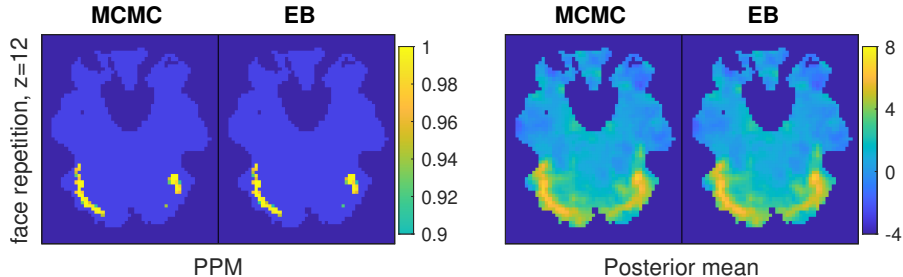}

\caption{Comparison between MCMC and EB in terms of PPMs (left) and posterior mean of activity coefficients (right) using the ICAR$(1)$ prior for the face repetition data. The presented PPM for EB is the same as in Fig.~\ref{fig:realPPMs}.}
\label{fig:MCMCvsEBPPMs}
\end{figure}

To assess how well the EB posterior with optimized hyperparameters approximates the full posterior, we also fit the model with the ICAR(1) prior using MCMC as described in \citet{Sid??n2017}, using the face repetition data. Fig.~\ref{fig:MCMCvsEBPPMs} compares PPMs and posterior mean maps between the two methods, and the differences are practically negligible, and much smaller than, for example, the differences between different spatial priors. The EB estimates for the spatial hyperparameters $\left\{ \tau_{k}^{2}\right\} $ are also very similar to the MCMC posterior mean. For this exercise the same conjugate gamma prior for $\tau_{k}^{2}$ as in \citet{Sid??n2017} was also for EB. The MCMC method used 10,000 iterations after 1,000 burnin samples and thinning factor 5.

These results support the conjecture made earlier, that the posterior distributions of the hyperparameters $\boldsymbol{\theta}$ are well approximated by point masses when the goal of the analysis is to correctly model the distribution of the activity coefficients $\mathbf{W}$. It would be interesting to do the same comparison for the other spatial priors, and the other hyperparameters ($\kappa
^2$, $h_x$ and $h_y$), but to our knowledge there exists no computationally feasible MCMC method to sample these parameters, which lack the conjugacy exploited for $\tau^2$. 



\section{Conclusions and directions for future research\label{sec:Discussion-and-future}}

We propose an efficient Bayesian inference algorithm for whole-brain analysis of fMRI data using the flexible and interpretable Mat\'{e}rn class of spatial priors. We study the empirical properties of the prior on simulated and two real fMRI datasets and conclude that the second order Matérn priors (M$(2)$ or A-M$(2)$) should be the preferred choice for future studies. The priors with $\alpha=1$ are clearly inferior in the sense that they do not find the seemingly correct activity patterns that are found by the priors with $\alpha=2$, they produce new samples that appear too speckled and they perform worse in the cross-validation. The differences between the M$(2)$ and ICAR$(2)$ are less evident, but fact that
they produce somewhat different activity maps for some datasets, that
new samples from the ICAR$(2)$ look too smooth, that M$(2)$ performed
consistently better in the cross-validation, and that the M$(2)$
prior parameters are easier interpreted all argues in favor of
the M$(2)$ prior. 

The introduced anisotropic Matérn prior was shown to perform slightly better than the isotropic Matérn prior in the cross-validation, but overall the differences between the results for the two priors is quite small. Still, A-M$(2)$ has the capacity to model also anisotropic datasets, while containing the M$(2)$ prior as a special case, and could therefore be the best alternative.

The optimization algorithm appears satisfactory with relatively fast
convergence. Using SGD is an improvement relative to the coordinate
descent algorithm employed for SVB in \citet{Sid??n2017}, because
following the gradient is in general the shorter way to reach the
optimum and there exists better theoretical guarantees for the convergence.
Also, well-known acceleration strategies, such as using momentum or
the approximate Hessian information, are easier to adopt to SGD and
one can thereby avoid the more ad hoc acceleration strategies used
in \citet[Appendix C]{Sid??n2017}.

The EB method is shown to approximate the exact MCMC posterior well empirically, suggesting that properly accounting for the uncertainty in the spatial hyperparameters is of minor importance if the main object is the activity maps.

As the smoothness parameter $\alpha$ appears to be the most important for the resulting activity maps, it would in future research be interesting to estimate it as a non-integer value, which could be addressed using the method in \citet{bolin2020rational}.

The PPMs reported in this work only contain the marginal probability of activation in each voxel. If instead using joint PPMs \citep{Yue2014a,Mejia2019} based on excursions sets \citep{Bolin2014a} to address the multiple comparison problem of classifying active voxels, it is likely to see larger differences between the M(2) and ICAR(2) prior, since the joint PPMs depend more on the spatial correlation. The joint PPMs are easily computed from MCMC output, but harder for the EB method due to the posterior covariance matrix being costly to compute. Future work should address this issue, which could probably be solved by extending the block RBMC method in \citet{Siden2018}.


The estimated spatial hyperparameters for the real datasets in Table~\ref{tab:Hyperparameters}
have strikingly similar values across different HRF regressors. A
natural idea is therefore to let these regressors share the same spatial
hyperparameters, at least when the tasks in the experiment are similar. Our Bayesian
inference algorithm is straightforwardly extended to this setting.


\bibliographystyle{apalike} 
\bibliography{Bibtex/library,Bibtex/manually} 

\begin{acknowledgement}
This work was funded by Swedish Research Council (Vetenskapsr\aa det)
grant no 2013-5229 and grant no 2016-04187. Finn Lindgren was funded
by the European Union's Horizon 2020 Programme for Research and Innovation,
no 640171, EUSTACE. Anders Eklund was funded by Center for Industrial
Information Technology (CENIIT) at Link\"{o}ping University.
\end{acknowledgement}

\newpage

\begin{supplement}

\section{Derivation of the gradient and approximate Hessian\label{Appendix gradient}}

For the optimization of the parameters $\boldsymbol{\theta}=\left\{ \boldsymbol{\theta}_{s},\boldsymbol{\lambda},\mathbf{A}\right\} $,
we need the gradient and approximate Hessian with respect to the different
hyperparameters of the log marginal likelihood $\log p\left(\mathbf{y}|\boldsymbol{\theta}\right)=\log p\left(\mathbf{y}|\boldsymbol{\beta},\boldsymbol{\theta}\right)+\log p\left(\boldsymbol{\beta}|\boldsymbol{\theta}\right)-\log\left(\boldsymbol{\beta}|\mathbf{y},\boldsymbol{\theta}\right)$,
which is constant with respect to $\boldsymbol{\beta}$. The gradient
of the log posterior is then simply $\frac{\partial}{\partial\theta_{i}}\log p\left(\boldsymbol{\theta}|\mathbf{y}\right)=\frac{\partial}{\partial\theta_{i}}\log p\left(\mathbf{y}|\boldsymbol{\theta}\right)+\frac{\partial}{\partial\theta_{i}}\log p\left(\boldsymbol{\theta}\right)$,
and similarly for the approximate Hessian. We will start this derivation
by considering the log likelihood $\log p\left(\mathbf{y}|\boldsymbol{\beta},\boldsymbol{\theta}\right)$,
then the conditional log posterior $\log\left(\boldsymbol{\beta}|\mathbf{y},\boldsymbol{\theta}\right)$,
before producing the expressions for the log marginal likelihood gradient
as well as the approximate log marginal likelihood Hessian. Finally,
the corresponding posterior gradient and Hessian can be computed by
adding the prior contributions derived in the last subsection.

To get a more robust optimization algorithm in practice, we use the
reparameterizations $\tau_{0,k}=\log\left(\tau_{k}^{2}\right)$, $\kappa_{0,k}=\log\left(\kappa_{k}^{2}\right)$
and $\lambda_{0,n}=\log\left(\lambda_{n}\right)$, so that the parameters
are defined over the whole $\mathbb{R}$ and then perform the optimization
over these new variables. The gradient for the new variables is easily
obtained from the gradient for the old ones using the chain rule,
for example $\frac{\partial\log p\left(\mathbf{y}|\boldsymbol{\theta}\right)}{\partial\tau_{0,k}}=\frac{\partial\log p\left(\mathbf{y}|\boldsymbol{\theta}\right)}{\partial\tau_{k}^{2}}\cdot\tau_{k}^{2}$.
For $A_{p,n}$ we use the logit reparameterization $A_{0,p,n}=\log\left(\frac{1+A_{p,n}}{2}\right)-\log\left(\frac{1-A_{p,n}}{2}\right)$
, guaranteeing $A_{p,n}\in\left(-1,1\right)$, which is the stability
region for AR$\left(1\right)$, and we have $\frac{\partial\log p\left(\mathbf{y}|\boldsymbol{\theta}\right)}{\partial A_{0,p,n}}=\frac{\partial\log p\left(\mathbf{y}|\boldsymbol{\theta}\right)}{\partial A_{p,n}}\cdot\frac{1-A_{p,n}^{2}}{2}$.

\subsection{Log likelihood}
As in \citet[Appendix A]{Sid??n2017}, the log likelihood can be written
as
\begin{eqnarray}
\log p\left(\mathbf{y}|\boldsymbol{\beta},\boldsymbol{\theta}\right) & = & \frac{T-P}{2}\sum_{n=1}^{N}\log\left(\lambda_{n}\right)-\frac{1}{2}\sum_{n=1}^{N}\lambda_{n}l_{n}\left(\mathbf{W}_{\cdot,n}\right)+\text{const},\label{eq:loglike_iid}
\end{eqnarray}
where
\[
l_{n}\left(\mathbf{W}_{\cdot,n}\right)=\mathbf{Y}_{\cdot,n}^{T}\mathbf{Y}_{\cdot,n}-2\mathbf{Y}_{\cdot,n}^{T}\mathbf{X}\mathbf{W}_{\cdot,n}+\mathbf{W}_{\cdot,n}^{T}\mathbf{X}^{T}\mathbf{X}\mathbf{W}_{\cdot,n}
\]
in the case when the noise is independent over time and
\begin{align*}
l_{n}\left(\mathbf{W}_{\cdot,n}\right)=\, & \mathbf{Y}_{\cdot,n}^{T}\mathbf{Y}_{\cdot,n}-2\mathbf{Y}_{\cdot,n}^{T}\mathbf{X}\mathbf{W}_{\cdot,n}+\mathbf{W}_{\cdot,n}^{T}\mathbf{X}^{T}\mathbf{X}\mathbf{W}_{\cdot,n}-2\mathbf{Y}_{\cdot,n}^{T}\mathbf{d}_{n}^{T}\mathbf{A}_{\cdot,n}+\mathbf{A}_{\cdot,n}^{T}\mathbf{d}_{n}\mathbf{d}_{n}^{T}\mathbf{A}_{\cdot,n}\\
&+\mathbf{W}_{\cdot,n}^{T}\mathbf{B}_{n}^{T}\mathbf{A}_{\cdot,n}+\mathbf{A}_{\cdot,n}^{T}\mathbf{B}_{n}\mathbf{W}_{\cdot,n}-\mathbf{W}_{\cdot,n}^{T}\left(\mathbf{R}\mathbf{A}_{\cdot,n}+\left(\mathbf{R}\mathbf{A}_{\cdot,n}\right)^{T}\right)\mathbf{W}_{\cdot,n}\\
&-\mathbf{A}_{\cdot,n}^{T}\left(\mathbf{D}_{n}\mathbf{W}_{\cdot,n}+\left(\mathbf{D}_{n}\mathbf{W}_{\cdot,n}\right)^{T}\right)\mathbf{A}_{\cdot,n}+\mathbf{W}_{\cdot,n}^{T}\left(\mathbf{A}_{\cdot,n}^{T}\mathbf{S}\mathbf{A}_{\cdot,n}\right)\mathbf{W}_{\cdot,n}
\end{align*}
when the noise follows an AR$\left(P\right)$-process in each voxel
with AR-parameters $\mathbf{A}_{\cdot,n}$. We follow the notation
in \citet{Sid??n2017}, except here $\boldsymbol{\beta}=\text{vec}\left(\mathbf{W}^{T}\right)$.
For convenience we list the different matrices and tensors and their
sizes also here: 
\begin{align*}
\underset{P\times\left(T-P\right)}{\mathbf{d}_{n}}\text{ contains lagged values of }\mathbf{Y}_{\cdot,n},\,\,\,\,\, & \underset{P\times\left(T-P\right)\times K}{\tilde{\mathbf{X}}}\text{ contains lagged values of }\mathbf{X}\\
\underset{P\times K}{\mathbf{B}_{n}}=\mathbf{Y}_{\cdot,n}^{\prime}\tilde{\mathbf{X}}+\mathbf{d}_{n}\mathbf{X},\,\,\,\,\,\underset{K\times K\times P}{\mathbf{R}}=\mathbf{X}^{\prime}\tilde{\mathbf{X}},\,\,\,\,\, & \underset{P\times K\times P}{\mathbf{D}_{n}}=\mathbf{d}_{n}\tilde{\mathbf{X}}\,\,\,\,\,\underset{P\times K\times K\times P}{\mathbf{S}}=\tilde{\mathbf{X}}\tilde{\mathbf{X}}.
\end{align*}
The motivation for using this seemingly cumbersome notation is that
it allows for precomputation of all sums and matrix products over
the time dimension, so that these can be avoided in each iteration
of the algorithm, giving greatly reduced computation times.

\subsection{Conditional log posterior \label{subsec:Conditional-log-posterior}}

Following the derivation in \citet[Appendix A]{Sid??n2017}, we obtain \newline
$\boldsymbol{\beta}|\mathbf{y},\boldsymbol{\theta}\sim\mathcal{N}(\tilde{\boldsymbol{\mu}},\tilde{\mathbf{Q}}^{-1})=\mathcal{N}(\tilde{\mathbf{Q}}^{-1}\mathbf{b},\tilde{\mathbf{Q}}^{-1})$
with 
\begin{equation}
\tilde{\mathbf{Q}}=\mathbf{X}^{T}\mathbf{X}\otimes\text{diag}\left(\boldsymbol{\lambda}\right)+\mathbf{Q},\,\,\,\,\,\,\,\,\tilde{\boldsymbol{\mu}}=\tilde{\mathbf{Q}}^{-1}\text{vec}\left(\text{diag}\left(\boldsymbol{\lambda}\right)\mathbf{Y}^{T}\mathbf{X}\right),\label{eq:postbeta_iid}
\end{equation}
for $P=0$ and for the case with autoregressive noise $\left(P>0\right)$
we have 
\begin{align}
\tilde{\mathbf{Q}} & =\mathbf{P}_{KN}^{T}\underset{n\in\left\{ 1,\ldots,N\right\} }{\text{blkdiag}}\left[\lambda_{n}\tilde{\mathbf{Q}}_{n}\right]\mathbf{P}_{KN}+\mathbf{Q},\,\,\,\,\tilde{\boldsymbol{\mu}}=\tilde{\mathbf{Q}}^{-1}\text{vec}\left(\text{diag}\left(\boldsymbol{\lambda}\right)\left[\begin{array}{c}
\vdots\\
\tilde{\mathbf{q}}_{n}\\
\vdots
\end{array}\right]_{n\in\left\{ 1,\ldots,N\right\} }\right),\label{eq:postbeta_ar}\\
\tilde{\mathbf{Q}}_{n} & =\mathbf{X}^{T}\mathbf{X}-\mathbf{R}\mathbf{A}_{\cdot,n}-\left(\mathbf{R}\mathbf{A}_{\cdot,n}\right)^{T}+\mathbf{A}_{\cdot,n}^{T}\mathbf{S}\mathbf{A}_{\cdot,n},\,\,\,\,\,\,\tilde{\mathbf{q}}_{n}=\mathbf{Y}_{\cdot,n}^{T}\mathbf{X}-\mathbf{A}_{\cdot,n}^{T}\mathbf{B}_{n}+\mathbf{A}_{\cdot,n}^{T}\mathbf{D}_{n}\mathbf{A}_{\cdot,n}\nonumber 
\end{align}
 where $\mathbf{P}_{KN}$ is the permutation matrix defined such that
$\text{vec}\left(\mathbf{W}\right)=\mathbf{P}_{KN}\text{vec}\left(\mathbf{W}^{T}\right)$.
We note that we can also write the parameters of the i.i.d. case on
the second, slightly more complicated form in Eq. (\ref{eq:postbeta_ar})
by instead choosing $\tilde{\mathbf{Q}}_{n}=\mathbf{X}^{T}\mathbf{X}$
and $\tilde{\mathbf{q}}_{n}=\mathbf{Y}_{\cdot,n}^{T}\mathbf{X}$.

\subsection{Gradient\label{subsec:Gradient}}

In this section we derive the gradient for the M$(2)$ case. The gradient
for the other spatial priors can be obtained using the same strategy
and these will be left out for brevity. In summary, the log marginal
likelihood
\begin{align}
\log p\left(\mathbf{y}|\boldsymbol{\theta}\right)= & \,\frac{T-P}{2}\sum_{n=1}^{N}\log\left(\lambda_{n}\right)-\frac{1}{2}\sum_{n=1}^{N}\lambda_{n}l_{n}\left(\mathbf{W}_{\cdot,n}\right)+\frac{1}{2}\log\left|\mathbf{Q}\right|-\frac{1}{2}\boldsymbol{\beta}^{T}\mathbf{Q}\boldsymbol{\beta}\label{eq:margloglike}\\
 & -\frac{1}{2}\log\left|\tilde{\mathbf{Q}}\right|+\frac{1}{2}\left(\boldsymbol{\beta}-\tilde{\boldsymbol{\mu}}\right)^{T}\tilde{\mathbf{Q}}\left(\boldsymbol{\beta}-\tilde{\boldsymbol{\mu}}\right)+\text{const}.\nonumber 
\end{align}
We begin by writing down the log marginal likelihood gradient and
then the derivation follows.
\begin{align}
\frac{\partial\log p\left(\mathbf{y}|\boldsymbol{\theta}\right)}{\partial\tau_{k}^{2}} & =\frac{N}{2\tau_{k}^{2}}-\frac{1}{2}\text{tr}\left(\tilde{\mathbf{Q}}^{-1}\left(\mathbf{J}^{kk}\otimes\mathbf{K}_{k}\mathbf{K}_{k}\right)\right)-\frac{1}{2}\mathbf{M}_{k,\cdot}\mathbf{K}_{k}\mathbf{K}_{k}\mathbf{M}_{k,\cdot}^{T},\label{eq:gradient}\\
\frac{\partial\log p\left(\mathbf{y}|\boldsymbol{\theta}\right)}{\partial\kappa_{k}^{2}} & =\text{tr}\left(\mathbf{K}_{k}^{-1}\right)-\tau_{k}^{2}\text{tr}\left(\tilde{\mathbf{Q}}^{-1}\left(\mathbf{J}^{kk}\otimes\mathbf{K}_{k}\right)\right)-\tau_{k}^{2}\mathbf{M}_{k,\cdot}\mathbf{K}_{k}\mathbf{M}_{k,\cdot}^{T},\nonumber \\
\frac{\partial\log p\left(\mathbf{y}|\boldsymbol{\theta}\right)}{\partial\lambda_{n}} & =\frac{T-P}{2\lambda_{n}}-\frac{1}{2}\text{tr}\left(\tilde{\mathbf{Q}}^{-1}\mathbf{P}_{KN}^{T}\left(\mathbf{J}^{nn}\otimes\tilde{\mathbf{Q}}_{n}\right)\mathbf{P}_{KN}\right)-\frac{1}{2}l_{n}\left(\mathbf{M}_{\cdot,n}\right),\nonumber \\
\frac{\partial\log p\left(\mathbf{y}|\boldsymbol{\theta}\right)}{\partial\mathbf{A}_{\cdot,n}} & =-\frac{1}{2}\lambda_{n}\left.\frac{\partial l_{n}\left(\mathbf{W}_{\cdot,n}\right)}{\partial\mathbf{A}_{.,n}}\right|_{\mathbf{W}=\mathbf{M}}-\frac{1}{2}\frac{\partial\log\left|\tilde{\mathbf{Q}}\right|}{\partial\mathbf{A}_{\cdot,n}}\nonumber, 
\end{align}
where $\mathbf{K}_{k}=\mathbf{G}+\kappa_{k}^{2}\mathbf{I}$ so that
$\mathbf{Q}_{k}=\tau_{k}^{2}\mathbf{K}_{k}\mathbf{K}_{k}$ and $\mathbf{J}^{ij}$
is the square single-entry matrix which is zero everywhere except
in $\left(i,j\right)$ where it is $1$. The size of $\mathbf{J}^{ij}$
is clear from the context and is here used in couple with the Kronecker
product to construct single-block matrices, where everything but one
block is zero.\textbf{ M} is the $K\times N$ matrix such that $\tilde{\boldsymbol{\mu}}=\text{vec}\left(\mathbf{M}^{T}\right)$
(compare with $\boldsymbol{\beta}=\text{vec}\left(\mathbf{W}^{T}\right)$).
The terms in the expression for $\frac{\partial\log p\left(\mathbf{y}|\boldsymbol{\theta}\right)}{\partial\mathbf{A}_{\cdot,n}}$
are given in Eq. (\ref{eq:AR gradient terms}).

We begin with computing the gradient with respect to $\tau_{k}^{2}$
and $\kappa_{k}^{2}$, that do not appear in the likelihood, which
is why the gradient is the same for the case $P=0$ and $P>0$ (given
$\tilde{\mathbf{Q}}$ and $\tilde{\boldsymbol{\mu}}$). Thereafter,
we treat $\lambda_{n}$ which is different in these cases depending
on the log likelihood term expression $l_{n}$ and lastly $A_{pn}$
which is only relevant for the case $P>0$. A reference for some of
the matrix algebraic operations used here is \citet{Petersen2007}.

\subsubsection*{Gradient with respect to $\tau_{k}^{2}$ and $\kappa_{k}^{2}$}

Some useful derivatives are
\begin{align*}
\frac{\partial\mathbf{Q}_{k}}{\partial\tau_{k}^{2}} & =\mathbf{K}_{k}\mathbf{K}_{k},\,\,\,\,\,\,\frac{\partial\mathbf{Q}_{k}}{\partial\kappa_{k}^{2}}=2\tau_{k}^{2}\mathbf{K}_{k},\,\,\,\,\,\,\frac{\partial\mathbf{Q}}{\partial\tau_{k}^{2}}=\mathbf{J}^{kk}\otimes\mathbf{K}_{k}\mathbf{K}_{k},\,\,\,\,\,\,\frac{\partial\mathbf{Q}}{\partial\kappa_{k}^{2}}=\mathbf{J}^{kk}\otimes2\tau_{k}^{2}\mathbf{K}_{k}.
\end{align*}
Also note that $\frac{\partial\mathbf{\tilde{Q}}}{\partial\tau_{k}^{2}}=\frac{\partial\mathbf{Q}}{\partial\tau_{k}^{2}}$
and $\frac{\partial\mathbf{\tilde{Q}}}{\partial\kappa_{k}^{2}}=\frac{\partial\mathbf{Q}}{\partial\kappa_{k}^{2}}$.
Furthermore
\begin{align*}
\frac{\partial\log\left|\mathbf{Q}_{k}\right|}{\partial\tau_{k}^{2}} & =\frac{\left|\mathbf{Q}_{k}\right|}{\left|\mathbf{Q}_{k}\right|}\text{tr}\left(\mathbf{Q}_{k}^{-1}\frac{\partial\mathbf{Q}_{k}}{\partial\tau_{k}^{2}}\right)=\text{tr}\left(\frac{1}{\tau_{k}^{2}}\mathbf{I}_{N}\right)=\frac{N}{\tau_{k}^{2}},\,\,\,\,\,\,\frac{\partial\log\left|\mathbf{Q}_{k}\right|}{\partial\kappa_{k}^{2}}=2\text{tr}\left(\mathbf{K}_{k}^{-1}\right),\\
\frac{\partial\log\left|\tilde{\mathbf{Q}}\right|}{\partial\tau_{k}^{2}} & =\text{tr}\left(\tilde{\mathbf{Q}}^{-1}\left(\mathbf{J}^{kk}\otimes\mathbf{K}_{k}\mathbf{K}_{k}\right)\right),\,\,\,\,\,\,\frac{\partial\log\left|\tilde{\mathbf{Q}}\right|}{\partial\kappa_{k}^{2}}=\text{tr}\left(\tilde{\mathbf{Q}}^{-1}\left(\mathbf{J}^{kk}\otimes2\tau_{k}^{2}\mathbf{K}_{k}\right)\right),\\
\frac{\partial\log\tilde{\boldsymbol{\mu}}^{T}\tilde{\mathbf{Q}}\tilde{\boldsymbol{\mu}}}{\partial\tau_{k}^{2}} & =\mathbf{b}^{T}\frac{\partial\tilde{\mathbf{Q}}^{-1}}{\partial\tau_{k}^{2}}\mathbf{b}=-\mathbf{b}^{T}\tilde{\mathbf{Q}}^{-1}\frac{\partial\tilde{\mathbf{Q}}}{\partial\tau_{k}^{2}}\tilde{\mathbf{Q}}^{-1}\mathbf{b}=-\tilde{\boldsymbol{\mu}}^{T}\left(\mathbf{J}^{kk}\otimes\mathbf{K}_{k}\mathbf{K}_{k}\right)\tilde{\boldsymbol{\mu}}\\ &=-\mathbf{M}_{k,\cdot}\mathbf{K}_{k}\mathbf{K}_{k}\mathbf{M}_{k,\cdot}^{T},\\
\frac{\partial\log\tilde{\boldsymbol{\mu}}^{T}\tilde{\mathbf{Q}}\tilde{\boldsymbol{\mu}}}{\partial\kappa_{k}^{2}} & =-\tilde{\boldsymbol{\mu}}^{T}\left(\mathbf{J}^{kk}\otimes2\tau_{k}^{2}\mathbf{K}_{k}\right)\tilde{\boldsymbol{\mu}}=-2\tau_{k}^{2}\mathbf{M}_{k,\cdot}\mathbf{K}_{k}\mathbf{M}_{k,\cdot}^{T}.
\end{align*}
Now, by setting $\boldsymbol{\beta}=\mathbf{0}$ and removing everything
that is constant with respect to $\boldsymbol{\tau}^{2}$ and $\boldsymbol{\kappa^{2}}$,
Eq. (\ref{eq:margloglike}) becomes
\[
\log p\left(\mathbf{y}|\boldsymbol{\theta}\right)=\frac{1}{2}\sum_{k=1}^{K}\log\left|\mathbf{Q}_{k}\right|-\frac{1}{2}\log\left|\tilde{\mathbf{Q}}\right|+\frac{1}{2}\tilde{\boldsymbol{\mu}}^{T}\tilde{\mathbf{Q}}\tilde{\boldsymbol{\mu}}+\text{const},
\]
and we get the derivatives with respect to $\tau_{k}^{2}$ and $\kappa_{k}^{2}$
in Eq. (\ref{eq:gradient}).

\subsubsection*{Gradient with respect to $\lambda_{n}$}

Note that
\begin{align*}
\frac{\partial\mathbf{\tilde{Q}}}{\partial\lambda_{n}^{2}} & =\mathbf{P}_{KN}^{T}\left(\mathbf{J}^{nn}\otimes\tilde{\mathbf{Q}}_{n}\right)\mathbf{P}_{KN}\text{,\,\,\,\,\,\,\,}\frac{\partial\log\left|\tilde{\mathbf{Q}}\right|}{\partial\lambda_{n}^{2}}=\text{tr}\left(\tilde{\mathbf{Q}}^{-1}\mathbf{P}_{KN}^{T}\left(\mathbf{J}^{nn}\otimes\tilde{\mathbf{Q}}_{n}\right)\mathbf{P}_{KN}\right),\\
 & \frac{\partial\log\left(\boldsymbol{\beta}-\tilde{\boldsymbol{\mu}}\right)^{T}\tilde{\mathbf{Q}}\left(\boldsymbol{\beta}-\tilde{\boldsymbol{\mu}}\right)}{\partial\lambda_{n}^{2}}=\left(\boldsymbol{\beta}-\tilde{\boldsymbol{\mu}}\right)^{T}\frac{\partial\mathbf{\tilde{Q}}}{\partial\lambda_{n}^{2}}\left(\boldsymbol{\beta}-\tilde{\boldsymbol{\mu}}\right)-2\frac{\partial\tilde{\boldsymbol{\mu}}^{T}}{\partial\lambda_{n}^{2}}\tilde{\mathbf{Q}}\left(\boldsymbol{\beta}-\tilde{\boldsymbol{\mu}}\right),
\end{align*}

and that the last expression is zero for $\boldsymbol{\beta}=\tilde{\boldsymbol{\mu}}$.
Thus, it is clear that the gradient with respect to $\lambda_{n}$
in Eq. (\ref{eq:gradient}) can be obtained from taking the derivative
of Eq. (\ref{eq:margloglike}) and evaluating at $\boldsymbol{\beta}=\tilde{\boldsymbol{\mu}}$.

\subsubsection*{Gradient with respect to $A_{pn}$}

Use that
\begin{align}
\frac{\partial l_{n}\left(\mathbf{W}_{\cdot,n}\right)}{\partial\mathbf{A}_{.,n}}= & 2\left[-\mathbf{Y}_{\cdot,n}^{T}\mathbf{d}_{n}^{T}+\mathbf{W}\mathbf{B}_{n}^{T}-\mathbf{W}_{\cdot,n}^{T}\mathbf{R}\mathbf{W}_{\cdot,n}\right.\label{eq:AR gradient terms}\\
 & \left.+\left(\mathbf{d}_{n}\mathbf{d}_{n}^{T}-\mathbf{D}_{n}\mathbf{W}_{\cdot,n}-\left(\mathbf{D}_{n}\mathbf{W}_{\cdot,n}\right)^{T}+\mathbf{W}_{\cdot,n}^{T}\mathbf{S}\mathbf{W}_{\cdot,n}\right)\mathbf{A}_{\cdot,n}\right],\nonumber \\
\frac{\partial\mathbf{\tilde{Q}}_{n}}{\partial A_{p,n}}= & -\mathbf{R}_{p}-\mathbf{R}_{p}^{T}+\mathbf{S}_{p}\mathbf{A}_{\cdot,n}+\left(\mathbf{S}_{p}\mathbf{A}_{\cdot,n}\right)^{T},\nonumber \\
\frac{\partial\log\left|\tilde{\mathbf{Q}}\right|}{\partial A_{p,n}}= & \text{tr}\left(\tilde{\mathbf{Q}}^{-1}\mathbf{P}_{KN}^{T}\left(\mathbf{J}^{nn}\otimes\lambda_{n}\frac{\partial\mathbf{\tilde{Q}}_{n}}{\partial A_{p,n}}\right)\mathbf{P}_{KN}\right),\nonumber 
\end{align}
where $\mathbf{R}_{p}$ and $\mathbf{S}_{p}$ are of sizes $K\times K$
and $K\times K\times P$ and refers to the $p$th sub-tensor from
the appropriate dimension of $\mathbf{R}$ and $\mathbf{S}$ respectively.
The expression in Eq. (\ref{eq:margloglike}) is derived after noting
that the remaining terms of the log likelihood become zero after taking
the derivative and evaluating at $\boldsymbol{\beta}=\tilde{\boldsymbol{\mu}}$.

\subsection{Approximate Hessian\label{subsec:Approximate-Hessian}}

The approximate Hessian for the log marginal likelihood in the M$\left(2\right)$
case, computed directly with respect to the parameters $\tau_{0,k}=\log\left(\tau_{k}^{2}\right)$,
$\kappa_{0,k}=\log\left(\kappa_{k}^{2}\right)$ is
\begin{align}
E_{\boldsymbol{\beta}|\mathbf{Y},\boldsymbol{\theta}}\left[\frac{\partial^{2}\log p\left(\mathbf{y},\boldsymbol{\beta}|\boldsymbol{\theta}\right)}{\partial\tau_{0,k}^{2}}\right] & =-\frac{\tau_{k}^{2}}{2}\text{tr}\left(\tilde{\mathbf{Q}}^{-1}\left(\mathbf{J}^{kk}\otimes\mathbf{K}_{k}\mathbf{K}_{k}\right)\right)-\frac{1}{2}\mathbf{M}_{k,\cdot}\mathbf{Q}_{k}\mathbf{M}_{k,\cdot}^{T},\label{eq:Hessian}\\
E_{\boldsymbol{\beta}|\mathbf{Y},\boldsymbol{\theta}}\left[\frac{\partial^{2}\log p\left(\mathbf{y},\boldsymbol{\beta}|\boldsymbol{\theta}\right)}{\partial\kappa_{0,k}^{2}}\right] & =\kappa_{k}^{2}\left[\text{tr}\left(\mathbf{K}_{k}^{-1}\right)-\kappa_{k}^{2}\tau_{k}^{2}\text{tr}\left(\mathbf{K}_{k}^{-1}\mathbf{K}_{k}^{-1}\right)+\right.\nonumber \\
 & \,\,\,\,\,\,-\tau_{k}^{2}\mathbf{M}_{k,\cdot}\mathbf{K}_{k}\mathbf{M}_{k,\cdot}^{T}-\tau_{k}^{2}\text{tr}\left(\tilde{\mathbf{Q}}^{-1}\left(\mathbf{J}^{kk}\otimes\mathbf{K}_{k}\right)\right)+\\ & \,\,\,\,\,\,-\kappa_{k}^{2}\tau_{k}^{2}\mathbf{M}_{k,\cdot}\mathbf{M}_{k,\cdot}^{T}-\kappa_{k}^{2}\tau_{k}^{2}\text{tr}\left(\tilde{\mathbf{Q}}^{-1}\left(\mathbf{J}^{kk}\otimes\mathbf{I}\right)\right).\nonumber 
\end{align}
The derivation starts by noting that the non-constant part of the
augmented log likelihood with respect to $\tau_{0,k}$ and $\kappa_{0,k}$
is
\begin{equation}
\log p\left(\mathbf{y},\boldsymbol{\beta}|\boldsymbol{\theta}\right)=\,\frac{1}{2}\log\left|\mathbf{Q}\right|-\frac{1}{2}\boldsymbol{\beta}^{T}\mathbf{Q}\boldsymbol{\beta}+\text{const}.\label{eq:augmented-likelihood}
\end{equation}
Taking the derivative twice with respect to $\tau_{0,k}$ and $\kappa_{0,k}$
and computing the expectation gives the result, after noting that
$E_{\boldsymbol{\beta}|\mathbf{Y},\boldsymbol{\theta}}\left[\boldsymbol{\beta}^{T}\mathbf{T}\boldsymbol{\beta}\right]=\text{tr}\left(\tilde{\mathbf{Q}}^{-1}\mathbf{T}\right)+\tilde{\boldsymbol{\mu}}^{T}\mathbf{T}\tilde{\boldsymbol{\mu}}$,
for general $KN\times KN$ matrix $\mathbf{T}$ \citep[Eq. (318)]{Petersen2007}.
We write out the derivation for $\tau_{0,k}$ as an example. First
note that
\[
\frac{\partial\mathbf{Q}}{\partial\tau_{0,k}}=\mathbf{J}^{kk}\otimes\frac{\partial}{\partial\tau_{0,k}}\exp\left(\tau_{0,k}\right)\mathbf{K}_{k}\mathbf{K}_{k}=\mathbf{J}^{kk}\otimes\mathbf{Q}_{k},
\]
so
\begin{align*}
\frac{\partial\log p\left(\mathbf{y},\boldsymbol{\beta}|\boldsymbol{\theta}\right)}{\partial\tau_{0,k}} & =\frac{1}{2}\text{tr}\left(\mathbf{Q}_{k}^{-1}\mathbf{Q}_{k}\right)-\frac{1}{2}\mathbf{W}_{k,\cdot}\mathbf{Q}_{k}\mathbf{W}_{k,\cdot}^{T}=\frac{N}{2}-\frac{1}{2}\mathbf{W}_{k,\cdot}\mathbf{Q}_{k}\mathbf{W}_{k,\cdot}^{T}\\
\frac{\partial^{2}\log p\left(\mathbf{y},\boldsymbol{\beta}|\boldsymbol{\theta}\right)}{\partial\tau_{0,k}^{2}} & =-\frac{1}{2}\mathbf{W}_{k,\cdot}\mathbf{Q}_{k}\mathbf{W}_{k,\cdot}^{T},
\end{align*}
and taking the conditional expectation with respect to $\boldsymbol{\beta}|\mathbf{Y},\boldsymbol{\theta}$
gives the result in Eq.~(\ref{eq:Hessian}).

\subsection{The anisotropic case}

We will in this subsection present the gradient and Hessian for the
anisotropic M$(2)$ model. We only consider $h_{0,k,x}=\log\left(h_{k,x}\right)$,
as the results for $h_{k,y}$ are completely symmetric. The derivations
can be performed analogously to Subsection~\ref{subsec:Gradient}
and Subsection~\ref{subsec:Approximate-Hessian}. The gradient is
\begin{align*}
\begin{split}
\frac{\partial\log p\left(\mathbf{y}|\boldsymbol{\theta}\right)}{\partial h_{0,k,x}}&=\text{tr}\left(\mathbf{K}_{k}^{-1}\frac{\partial\mathbf{K}_{k}}{\partial h_{0,k,x}}\right)-\tau_{k}^{2}\text{tr}\left(\tilde{\mathbf{Q}}^{-1}\left(\mathbf{J}^{kk}\otimes\left(\mathbf{K}_{k}\frac{\partial\mathbf{K}_{k}}{\partial h_{0,k,x}}\right)\right)\right)+\\ & \,\,\,\,\,\, -\tau_{k}^{2}\mathbf{M}_{k,\cdot}\mathbf{K}_{k}\frac{\partial\mathbf{K}_{k}}{\partial h_{0,k,x}}\mathbf{M}_{k,\cdot}^{T},
\end{split}
\end{align*}
where $\frac{\partial\mathbf{K}_{k}}{\partial h_{0,k,x}}=\exp\left(h_{0,k,x}\right)\mathbf{G}_{x}-\exp\left(-h_{0,k,x}\right)\exp\left(-h_{0,k,y}\right)\mathbf{G}_{z}$.
The approximate Hessian is
\begin{align*}
\begin{split}
E_{\boldsymbol{\beta}|\mathbf{Y},\boldsymbol{\theta}}\left[\frac{\partial^{2}\log p\left(\mathbf{y},\boldsymbol{\beta}|\boldsymbol{\theta}\right)}{\partial h_{0,k,x}^{2}}\right] & =-\text{tr}\left(\mathbf{K}_{k}^{-1}\frac{\partial\mathbf{K}_{k}}{\partial h_{0,k,x}}\mathbf{K}_{k}^{-1}\frac{\partial\mathbf{K}_{k}}{\partial h_{0,k,x}}\right)+\text{tr}\left(\mathbf{K}_{k}^{-1}\frac{\partial^{2}\mathbf{K}_{k}}{\partial h_{0,k,x}^{2}}\right)+\\
 & \,\,\,\,\,\,-\tau_{k}^{2}\left[\mathbf{M}_{k,\cdot}\mathbf{H}_{k,x}\mathbf{M}_{k,\cdot}^{T}+\text{tr}\left(\tilde{\mathbf{Q}}^{-1}\left(\mathbf{J}^{kk}\otimes\mathbf{H}_{k,x}\right)\right)\right],\nonumber 
\end{split}
\end{align*}
where $\frac{\partial^{2}\mathbf{K}_{k}}{\partial h_{0,k,x}^{2}}=\exp\left(h_{0,k,x}\right)\mathbf{G}_{x}+\exp\left(-h_{0,k,x}\right)\exp\left(-h_{0,k,y}\right)\mathbf{G}_{z}$
and \newline $\mathbf{H}_{k,x}=\frac{\partial\mathbf{K}_{k}}{\partial h_{0,k,x}}\frac{\partial\mathbf{K}_{k}}{\partial h_{0,k,x}}+\mathbf{K}_{k}\frac{\partial^{2}\mathbf{K}_{k}}{\partial h_{0,k,x}^{2}}$.
The first trace of this expression will be Hutchinson approximated
using $\text{tr}\left(\mathbf{K}_{k}^{-1}\frac{\partial\mathbf{K}_{k}}{\partial h_{0,k,x}}\mathbf{K}_{k}^{-1}\frac{\partial\mathbf{K}_{k}}{\partial h_{0,k,x}}\right)\approx\frac{1}{N_{s}}\sum_{j=1}^{N_{s}}\mathbf{v}_{j}^{T}\mathbf{K}_{k}^{-1}\frac{\partial\mathbf{K}_{k}}{\partial h_{0,k,x}}\mathbf{K}_{k}^{-1}\frac{\partial\mathbf{K}_{k}}{\partial h_{0,k,x}}\mathbf{v}_{j}$,
which requires solving two equation systems for every term and a somewhat
higher computational cost.

\subsection{Spatial hyperparameter priors\label{subsec:Spatial-parameter-priors}}

This section covers the priors of the spatial hyperparameters and
their derivatives and second derivatives. We drop the sub-indexing
with respect to $k$ throughout this section as the parameter priors
are mutually independent.

\subsubsection*{Priors for $\tau^{2}$ and $\kappa^{2}$ for M$(2)$}

For the spatial Mat\'{e}rn prior with $\alpha=2$ the joint PC log prior
for $\tau^{2}$ and $\kappa$ is
\begin{equation}
\log p\left(\tau^{2},\kappa\right)=-\frac{3}{2}\log\tau^{2}+\left(\frac{d}{2}-1-\nu\right)\log\kappa-\lambda_{1}\kappa^{d/2}-\lambda_{3}\kappa^{-\nu}\left(\tau^{2}\right)^{-1/2}+\text{const}.\label{eq:PCprior}
\end{equation}
The PC prior controls the spatial range $\rho$ and marginal variance
$\sigma^{2}$ of the spatial field, which are a bijective transform
of $\tau^{2}$ and $\kappa$, through a priori probabilities $P\left(\rho<\rho_{0}\right)=\xi_{1}$
and $P\left(\sigma^{2}>\sigma_{0}^{2}\right)=\xi_{2}$. This generates
the constants $\lambda_{1}=-\log\left(\xi_{1}\right)\left(\rho_{0}/\sqrt{8\nu}\right)^{d/2}$
and $\lambda_{3}=-\frac{\log\left(\xi_{2}\right)}{\sigma_{0}}\sqrt{\frac{\Gamma\left(\nu\right)}{\Gamma\left(\nu+d/2\right)\left(4\pi\right)^{d/2}}}$
in Eq.~(\ref{eq:PCprior}). As default we will use the values $\xi_{1}=\xi_{2}=0.05$,
$\rho_{0}=2$ voxels and $\sigma_{0}$ corresponding to $2\%$ of
the global mean signal. It is straightforward to obtain the derivatives
\begin{align}
\frac{\partial\log p\left(\tau^{2},\kappa\right)}{\partial\left(\tau^{2}\right)} & =-\frac{3}{2\tau^{2}}+\frac{\lambda_{3}\kappa^{-\nu}}{2}\left(\tau^{2}\right)^{-3/2}\label{eq:PCprior1derivative}\\
\frac{\partial\log p\left(\tau^{2},\kappa\right)}{\partial\left(\kappa^{2}\right)} & =\frac{1}{2\kappa}\left(\frac{d/2-1-\nu}{\kappa}-\frac{\lambda_{1}d}{2}\kappa^{d/2-1}+\lambda_{3}\nu\kappa^{-\nu-1}\left(\tau^{2}\right)^{-1/2}\right).\nonumber 
\end{align}
Changing parameterization to $\tau_{0}$ and $\kappa_{0}$ as before
and taking the second derivative with respect to these gives
\begin{align}
\frac{\partial^{2}\log p\left(\tau_{0},\kappa_{0}\right)}{\partial\tau_{0}^{2}}= & -\frac{\lambda_{3}}{4}\exp\left(-\frac{\nu\kappa_{0}}{2}\right)\exp\left(-\frac{\tau_{0}}{2}\right)\label{eq:PCprior2derivative}\\
\frac{\partial^{2}\log p\left(\tau_{0},\kappa_{0}\right)}{\partial\kappa_{0}^{2}}= & \frac{1}{4}\left[-\left(-d/2-1-\nu\right)\exp\left(-\frac{\kappa_{0}}{2}\right)-\left(\frac{d}{2}-1\right)\frac{\lambda_{1}d}{2}\exp\left(\frac{\kappa_{0}}{2}\left(\frac{d}{2}-1\right)\right)\right.\nonumber \\
 & \,\,\,\,+\left.\left(\nu-1\right)\lambda_{3}\nu\exp\left(\frac{\kappa_{0}}{2}\left(\nu-1\right)\right)\exp\left(-\frac{\tau_{0}}{2}\right)\right].\nonumber 
\end{align}

\subsubsection*{Priors for $\tau^{2}$ and $\kappa^{2}$ for M$\left(1\right)$}

In this situation, we use independent log-normal priors for $\tau^{2}$
and $\kappa^{2}$, that is $\tau_{0}\sim\mathcal{N}\left(\mu_{\tau_{0}},\sigma_{\tau_{0}}^{2}\right)$
and $\kappa_{0}\sim\mathcal{N}\left(\mu_{\kappa_{0}},\sigma_{\kappa_{0}}^{2}\right)$.
For $\tau_{0}$ we have the derivatives
\begin{align}
\frac{\partial\log p\left(\tau^{2},\kappa\right)}{\partial\tau_{0}} & =-\frac{\tau_{0}-\mu_{\tau_{0}}}{\sigma_{\tau_{0}}^{2}},\,\,\,\,\,\,\,\frac{\partial^{2}\log p\left(\tau^{2},\kappa\right)}{\partial\tau_{0}^{2}}=-\frac{1}{\sigma_{\tau_{0}}^{2}},\label{eq:logNPriorM1}
\end{align}
and correspondingly for $\kappa_{0}$. Per default we use $\mu_{\tau_{0}}=\log0.01$,
$\mu_{\kappa_{0}}=\log0.1$, $\sigma_{\tau_{0}}=4$, $\sigma_{\kappa_{0}}=1$.

\subsubsection*{Priors for $h_{x}$ and $h_{y}$ for the anisotropic prior}

We use a log-normal prior for $h_{x}$ and $h_{y}$.
For $h_{0,x}=\log\left(h_{x}\right)$ we have the derivatives
\begin{align}
\frac{\partial\log p\left(h_{0,x},h_{0,y}\right)}{\partial h_{0,x}} & =-\frac{2}{3\sigma_{h}^{2}}\left(h_{0,x}-\frac{h_{0,y}}{2}\right),\,\,\,\,\,\,\,\frac{\partial^{2}\log p\left(h_{0,x},h_{0,y}\right)}{\partial h_{0,x}^{2}}=-\frac{2}{3\sigma_{h}^{2}},\label{eq:logNPriorAniso}
\end{align}
and correspondingly for $h_{0,y}$.

\subsubsection*{Priors for $\tau^{2}$ for ICAR$\left(1\right)$ and ICAR$\left(2\right)$}

We use the PC prior for $\tau^{2}$ for Gaussian random effects from
\citet{Simpson2017}
\begin{equation}
p\left(\tau^{2}\right)=\frac{\lambda_{2}}{2}\left(\tau^{2}\right)^{-3/2}\exp\left(-\lambda_{2}\left(\tau^{2}\right)^{-1/2}\right),\,\,\,\,\,\,\tau^{2}>0.\label{eq:PCpriortau2}
\end{equation}
By specifying $\sigma_{0}^{2}$ and $\xi_{2}$ so that $P\left(\sigma_{i|-i}^{2}>\sigma_{0}^{2}\right)=\xi_{2}$,
we get $\lambda_{2}=-\log\left(\xi_{2}\right)/\left(\sigma_{0}\sqrt{6}\right)$
for ICAR$\left(1\right)$ and $\lambda_{2}=-\log\left(\xi_{2}\right)/\left(\sigma_{0}\sqrt{42}\right)$
for ICAR$(2)$. Derivatives are obtained as
\begin{align}
\frac{\partial\log p\left(\tau^{2}\right)}{\partial\left(\tau^{2}\right)} & =-\frac{3}{2\tau^{2}}+\frac{\lambda_{2}}{2}\left(\tau^{2}\right)^{-3/2}\label{eq:PCpriorderivativetau2}\\
\frac{\partial^{2}\log p\left(\tau_{0}\right)}{\partial\tau_{0}^{2}}= & -\frac{\lambda_{2}}{4}\exp\left(-\frac{\tau_{0}}{2}\right).\nonumber 
\end{align}
When comparing to results from our older paper \citep{Sid??n2017},
we use the same gamma prior as used there for ICAR$\left(1\right)$,
$\tau^{2}\sim\Gamma\left(q_{1},q_{2}\right)$, which has derivatives
\begin{align*}
\frac{\partial\log p\left(\tau^{2}\right)}{\partial\left(\tau^{2}\right)} & =q_{2}-1-\frac{\tau^{2}}{q_{1}}\\
\frac{\partial^{2}\log p\left(\tau_{0}\right)}{\partial\tau_{0}^{2}}= & -\frac{1}{q_{1}}\exp\left(\tau_{0}\right).
\end{align*}
We use the default values $q_{1}=10$ and $q_{2}=0.1$.

\section{Cross-Validation\label{sec:Cross-Validation}}

\subsection{Cross-validation over left out voxels}

To reduce the impact of the noise model, we drop the entire BOLD time
series for $d\%$ of the voxels to obtain the cross-validation voxel
set $D$, and predict the time series $\mathbf{Y}_{\cdot,D}$ given
the other data $\mathbf{Y}_{\cdot,D^{c}}$. To separate the activation
and nuisance regressors, we rewrite the model as
\begin{equation}
\underset{T\times N}{\mathbf{Y}}=\underset{T\times K}{\mathbf{X}}\underset{K\times N}{\mathbf{W}}+\mathbf{E}=\left[\begin{array}{cc}
\dot{\mathbf{X}} & \ddot{\mathbf{X}}\end{array}\right]\left[\begin{array}{c}
\dot{\mathbf{W}}\\
\ddot{\mathbf{W}}
\end{array}\right]+\mathbf{E}=\dot{\mathbf{X}}\dot{\mathbf{W}}+\ddot{\mathbf{X}}\ddot{\mathbf{W}}+\mathbf{E},\label{eq:ModelDivided}
\end{equation}
with $\dot{\mathbf{W}}$ corresponding to the $K_{act}\times N$ activity
regression coefficients, which have spatial priors, and $\ddot{\mathbf{W}}$
corresponding to the nuisance regression coefficients. We define the
cross-validation error time series as
\begin{align}
\mathbf{E}_{\cdot,D}^{CV} & =\mathbf{R}_{\cdot,D}-\ddot{\mathbf{X}}E\left(\ddot{\mathbf{W}}_{\cdot,D}|\mathbf{R}_{\cdot,D},\boldsymbol{\theta}\right),\label{eq:CVerror}\\
\mathbf{R}_{\cdot,D} & =\mathbf{Y}_{\cdot,D}-\dot{\mathbf{X}}E\left(\dot{\mathbf{W}}_{\cdot,D}|\mathbf{Y}_{\cdot,-D},\boldsymbol{\theta}\right),\nonumber 
\end{align}
which is computed in two steps. First the out-of-sample residuals $\mathbf{R}_{\cdot,D}$
of the spatial part of the model are computed and then a new model
$\mathbf{R}_{\cdot,D}=\ddot{\mathbf{X}}\ddot{\mathbf{W}}_{\cdot,D}+\mathbf{E}_{\cdot,D}$
is fitted for each voxel independently, using the original values
for the parameters $\boldsymbol{\theta}$, before the error time series
$\mathbf{E}_{\cdot,D}^{CV}$ can be computed. The reason for this
seemingly complicated procedure is to reduce the impact of the nuisance
regressors on the evaluation of the spatial model. We compute the
in-sample errors as $\mathbf{E}^{IS}=\mathbf{Y}-\mathbf{X}E\left(\mathbf{W}|\mathbf{Y},\boldsymbol{\theta}\right)$.
We compute the MAE and RMSE for voxel set $D$ as
\begin{equation}
\text{MAE}=\frac{1}{T\left|D\right|}\sum_{t=1}^{T}\sum_{n\in D}\left|\mathbf{E}_{t,n}\right|,\,\,\,\,\,\text{RMSE}=\sqrt{\frac{1}{T\left|D\right|}\sum_{t=1}^{T}\sum_{n\in D}\mathbf{E}_{t,n}^{2}}.\label{eq:MAERMSE}
\end{equation}
The spatial posterior predictions for the dropped voxels $E\left(\dot{\mathbf{W}}_{\cdot,D}|\mathbf{Y}_{\cdot,-D},\boldsymbol{\theta}\right)$
are computed using Eq.~(\ref{eq:postbeta_ar}) after replacing $\tilde{\mathbf{Q}}_{n}$
and $\tilde{\mathbf{q}}_{n}$ with $\mathbf{0}$ for all $n\in D$.

For the proper scoring rules CRPS, IGN and INT, we also need to compute
the predictive standard deviation in each voxel. This is done in a
way that neglects the uncertainty in the intercept and head motion
regressors. For an unseen datapoint $\tilde{\mathbf{Y}}_{t,n}$, the
law of total variance gives
\begin{align}
\text{Var}\left(\tilde{\mathbf{Y}}_{t,n}|\mathbf{Y}_{\cdot,-D},\boldsymbol{\theta},\ddot{\mathbf{W}}\right) & =\text{\ensuremath{\text{E}_{\mathbf{\dot{\mathbf{W}}}|\mathbf{Y}_{\cdot,-D}}\left[\text{Var}\left(\tilde{\mathbf{Y}}_{t,n}|\boldsymbol{\theta},\mathbf{W}\right)\right]}}+\text{Var}_{\mathbf{\dot{\mathbf{W}}}|\mathbf{Y}_{\cdot,-D}}\left[\text{E\ensuremath{\left(\tilde{\mathbf{Y}}_{t,n}|\boldsymbol{\theta},\mathbf{W}\right)}}\right]\label{eq:predictiveVariance}\\
 & =\text{Var\ensuremath{\left(\tilde{\mathbf{Y}}_{t,n}|\boldsymbol{\theta},\mathbf{W}\right)}}+\mathbf{X}_{t,(1:K_{act})}\text{Var}\left(\dot{\mathbf{W}}|\mathbf{Y}_{\cdot,-D}\right)\mathbf{X}_{t,(1:K_{act})}^{T},\nonumber 
\end{align}
where the first term is simply the variance of the AR noise process
in voxel $n$ that does not depend on $\mathbf{W}$ and which can
be obtained given the AR parameters $\mathbf{A}_{\cdot,n}$ through
the Yule-Walker equations \citep[see for example][]{Cryer2008}. The
second term can be computed using the simple RBMC estimator as in Eq.~(\ref{eq:simpleRBMC}) in the main article after replacing $\tilde{\mathbf{Q}}_{n}$
with $\mathbf{0}$ for all $n\in D$ as was done for the mean. The
predictive distribution for $x=\mathbf{E}_{t,n}$ is Gaussian with
mean $\mu=0$ and variance $\sigma^{2}$ as in Eq.~(\ref{eq:predictiveVariance})
which gives simple expressions for the scores as
\begin{align*}
\text{CRPS}_{t,n} & =\sigma\left[\frac{1}{\sqrt{\pi}}-2\varphi\left(\frac{x-\mu}{\sigma}\right)-\frac{x-\mu}{\sigma}\left(2\Phi\left(\frac{x-\mu}{\sigma}\right)-1\right)\right],\\
\text{IGN}_{t,n} & =\log\left(\frac{1}{\sigma}\varphi\left(\frac{x-\mu}{\sigma}\right)\right),\nonumber \\
\text{INT}_{t,n} & =2A\sigma+\frac{2}{u}\left[\left(\mu-A\sigma-x\right)\mathbf{1}\left(x<\mu-A\sigma\right)+\left(x-\left(\mu+A\sigma\right)\right)\mathbf{1}\left(x>\mu+A\sigma\right)\right],\nonumber 
\end{align*}
where $\varphi$ and $\Phi$ denotes the standard normal PDF and CDF
and $A=\Phi^{-1}\left(1-u/2\right)\approx1.96$ for $u=0.05$,
which is used by default. The presented values for the scores are
averages across all time points and left out voxels.

\section{Additional Results}
\begin{figure}[H]
\includegraphics[width=1\linewidth,trim={1mm 8mm 2mm 4mm},clip]{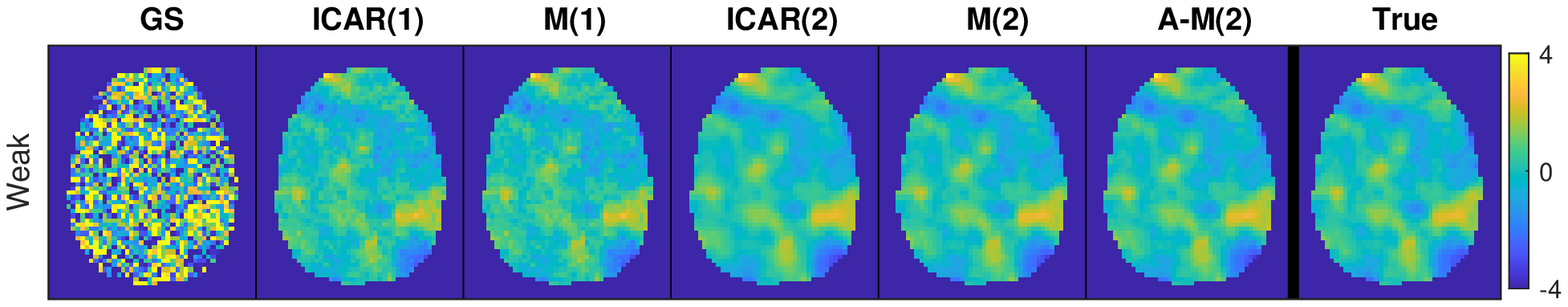}
\includegraphics[width=1\linewidth,trim={1mm 8mm 2mm 9mm},clip]{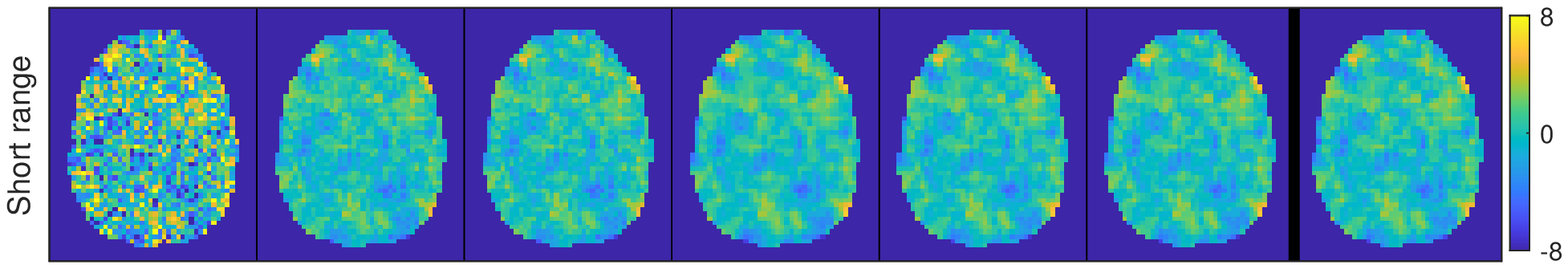}
\includegraphics[width=1\linewidth,trim={1mm 8mm 2mm 9mm},clip]{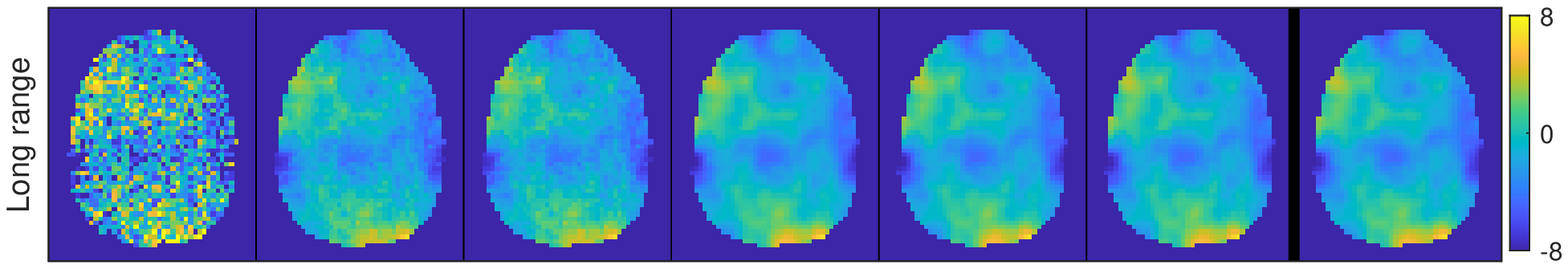}
\includegraphics[width=1\linewidth,trim={1mm 8mm 2mm 9mm},clip]{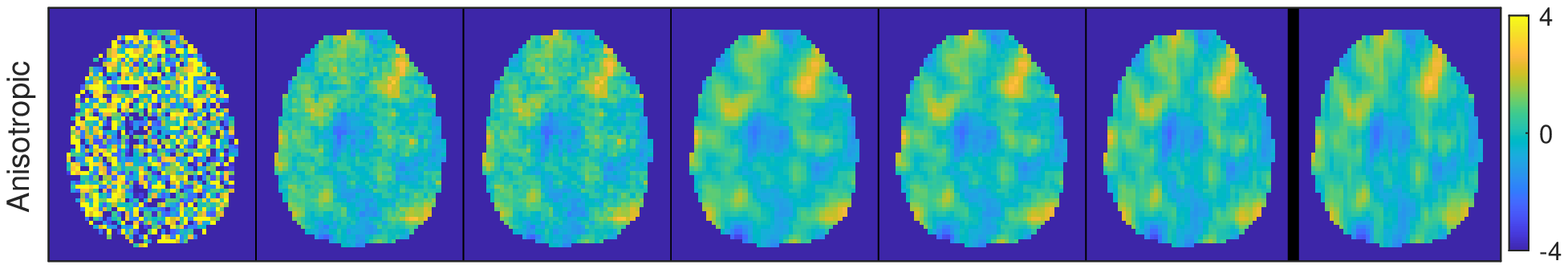}

\caption{Posterior means of activity coefficients for the four conditions of the simulated dataset, estimated with different spatial priors, which are used in computation of the PPMs in Fig.~\ref{fig:simulatedMaps}.}
\label{fig:simulatedPostMeanMaps}
\end{figure}

\begin{figure}[H]
\includegraphics[width=1\linewidth,trim={1mm 6mm 2mm 2mm},clip]{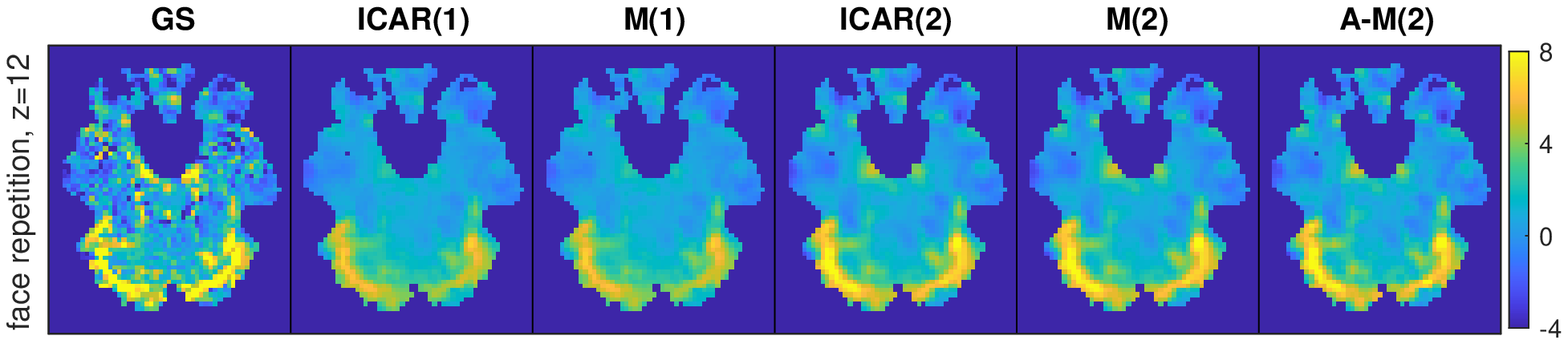}
\includegraphics[width=1\linewidth,trim={1mm 6mm 2mm 7mm},clip]{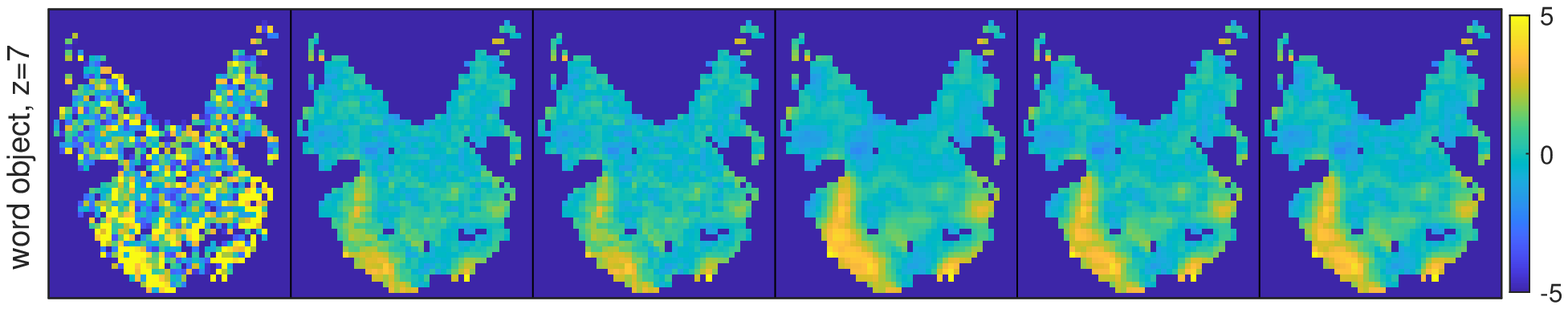}
\includegraphics[width=1\linewidth,trim={1mm 6mm 2mm 10mm},clip]{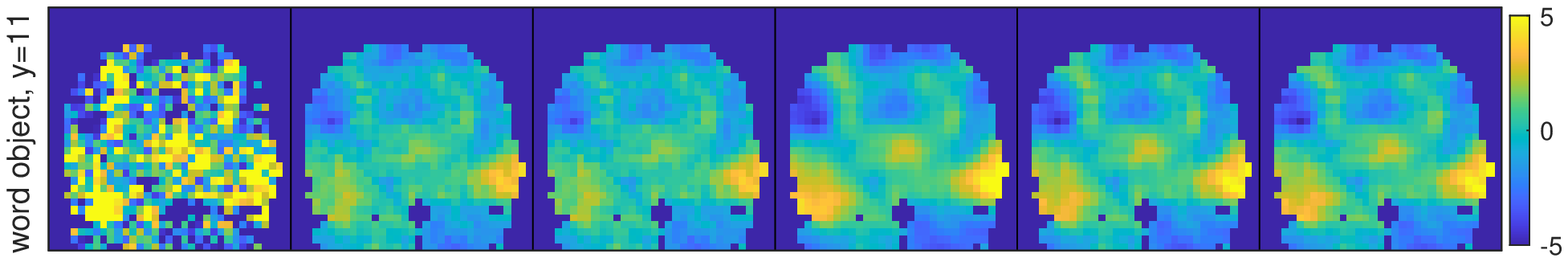}

\caption{Posterior means for the two real datasets, when using different spatial priors, which are used in computation of the PPMs in Fig.~\ref{fig:realPPMs}. The top row shows axial slice 12 of the face repetition dataset, and the middle and bottom rows show axial slice 7 and coronal slice 11 of the word object dataset. The spatial priors are summarised in Table~\ref{tab:Precision-matrices}.}
\label{fig:realPostMeans}
\end{figure}

\begin{table}[H]
\caption{Cross-validation scores for the two datasets, comparing the different
spatial priors. The scores are computed as means across voxels, and presented in negatively oriented forms, so that smaller values are always better. In-sample refers
to the average across all voxels, while the other columns shows means
and standard errors across 50 random sets of left out voxels. \label{tab:CVTable}}
\includegraphics[trim=1mm 0mm 0mm 1mm,clip,scale=0.8]{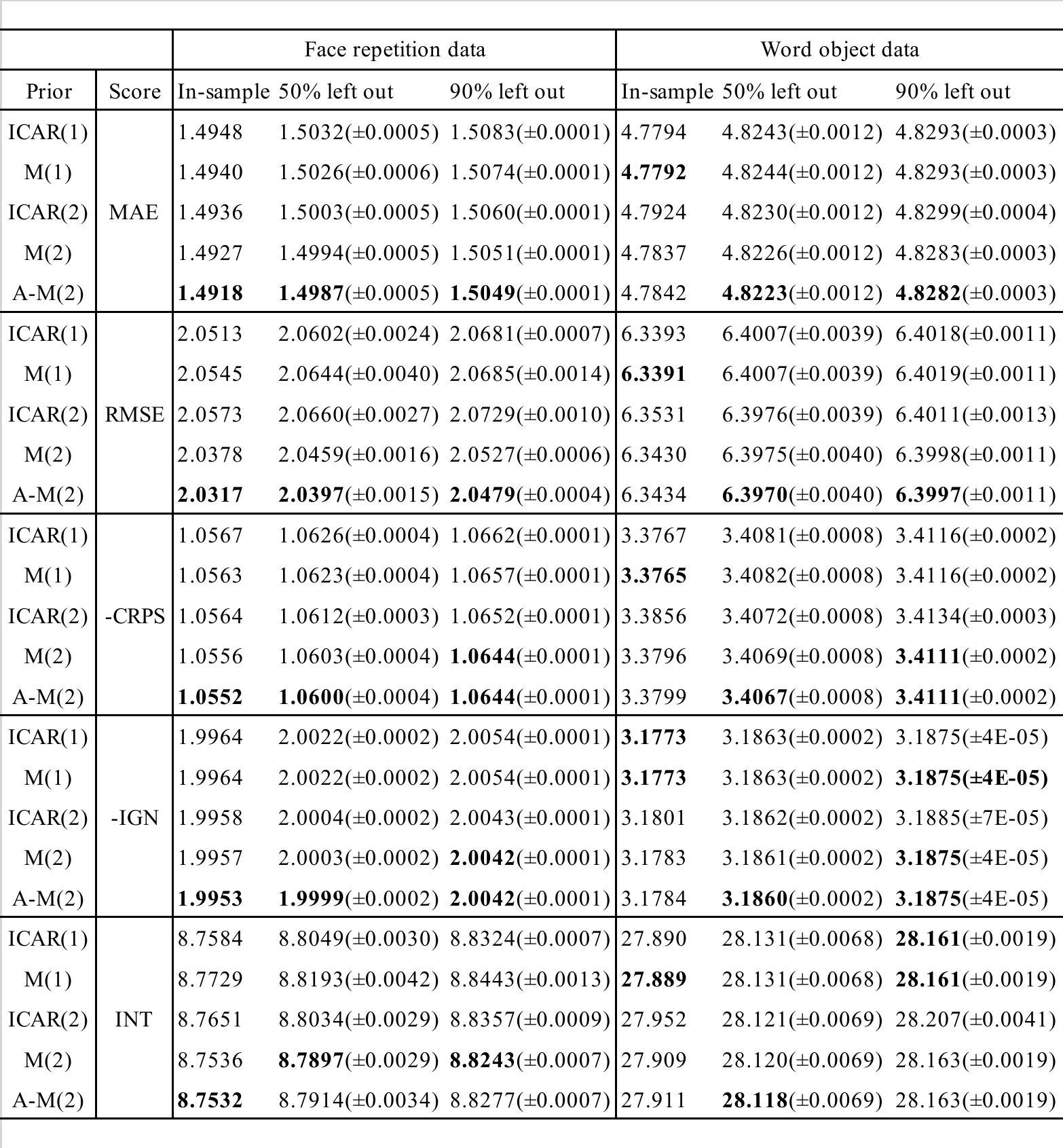}
\end{table}
\end{supplement}

\end{document}